\definecolor{ddorange}{rgb}{1,0.5,0}
\theoremstyle{plain}
\newtheorem{theorem}{Theorem}[section]
\newtheorem{lemma}[theorem]{Lemma}
\newtheorem{corollary}[theorem]{Corollary}
\newtheorem{remark}[theorem]{Remark}
\newtheorem{definition}[theorem]{Definition}
\theoremstyle{definition}
\theoremstyle{remark}
\numberwithin{equation}{section}
\mathchardef\emptyset="001F
\newcommand{\e}{\varepsilon}
\newcommand{\Om}{\Omega}
\newcommand{\R}{{\mathbb R}}
\newcommand{\F}{{\mathcal F}}
\newcommand{\G}{{\mathcal G}}
\newcommand{\E}{{\mathcal E}}
\newcommand{\A}{{\mathcal A}} 
\renewcommand{\E}{{\mathcal E}}
\renewcommand{\P}{{\mathcal P}^9}
\newcommand{\Z}{{\mathbb Z}}
\newcommand{\N}{{\mathbb N}}
\renewcommand{\H}{{\mathcal H}}
\newcommand{\dist}{\mathrm{dist}}
\newcommand{\be}{\begin{equation}}
\newcommand{\ee}{\end{equation}}
\newcommand{\bes}{\begin{equation*}}
\newcommand{\ees}{\end{equation*}}
\newcommand{\bea}{\begin{eqnarray}}
\newcommand{\eea}{\end{eqnarray}}
\newcommand{\beas}{\begin{eqnarray*}}
\newcommand{\eeas}{\end{eqnarray*}}
\renewcommand{\mod}{\mathop\mathrm{mod}}
\begin{document}
\title[Interfacial energies of systems of chiral molecules
]{Interfacial energies of systems of chiral molecules
}
\author{Andrea Braides}
\author{Adriana Garroni}
\author{Mariapia Palombaro}
\address
{Andrea Braides: Dipartimento di Matematica, Universit\`a di Roma Tor Vergata, via della ricerca scientifica 1, 00133 Roma, Italy
}
\email{braides@mat.uniroma2.it}
\address
{Adriana Garroni: Dipartimento di Matematica, Sapienza Universit\`a di Roma, 00185 Roma, Italy
}
\email{garroni@mat.uniroma1.it}
\address{
Mariapia Palombaro: University of Sussex, Department of Mathematics, Pevensey 2 Building, Falmer Campus,
Brighton BN1 9QH, United Kingdom}
\email{M.Palombaro@sussex.ac.uk
}
%
%
\begin{abstract}
We consider a simple model for the assembly of chiral molecules in two dimensions driven by maximization of the contact area. We derive a macroscopic model described by a parameter taking nine possible values corresponding to the possible minimal microscopic patterns and modulated phases of the chiral molecules. We describe the overall behaviour by means of an interaction energy of perimeter type between such phases. This energy is a crystalline perimeter energy, highlighting preferred directions for the interfaces between ensembles of molecules labelled by different values of the parameter.
\end{abstract}
\maketitle
{\small
\keywords{\noindent {\bf Keywords:} chiral molecules, lattice systems, interfacial energies, Gamma-convergence, crystalline energies, Wulff shapes
}
\par

\bigskip
\section*{Introduction}
We consider a simple model of interaction between ensembles of two types of chiral molecules in two dimensions. The model has been described, together with other ones, in a paper by P.~Szabelski and A.~Woszczyk \cite{SW} (see also \cite{BSG}). 
\begin{figure}[h!]
\centering
\includegraphics[width=.33\textwidth]{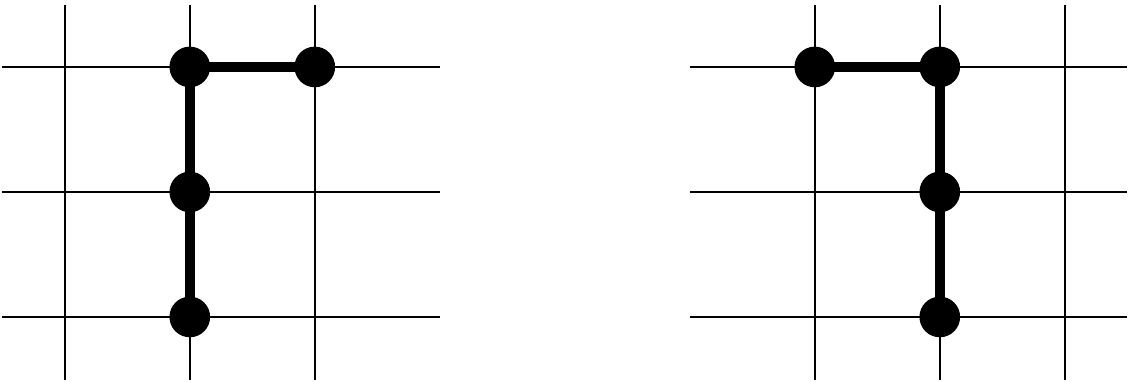}
\caption{Schematic picture of chiral molecules.}\label{chi-mol}
\end{figure}
Such molecules are considered as occupying four sites of a square lattice with three points aligned in the vertical direction and one on one side in the two fashions represented in Fig.~\ref{chi-mol}.
We consider collections of molecules; i.e., collections of disjoint sets of points, each of which
differs from one of the two just described by an integer translation.
In \cite{SW} the energy of a molecule is expressed as
\begin{equation}\label{enSW}
\sum_{i=1}^4\sum_{j=1}^4s_{ij}\ ,
\end{equation}
where $i$ parameterizes the four sites composing a molecule, and $j$ parameterizes
the four neighbouring sites in $\Z^2$ of each element of the molecule; the value $s_{ij}$ equals $0$ if the site parameterized by $j$ belongs to some molecule (in particular it is $0$ if that site belongs to the same molecule)
and equals $1$ if it does not belong to any other molecule. A discussion about chemical mechanisms for such energies can be found in \cite{P,PSR}. 
Our model can be viewed as a lattice system. Indeed, (up to an additive constant) the energy density $s_{ij}$ in \eqref{enSW} is nothing else than a ferromagnetic spin energy if we define the spin variable equal to $1$ on the sites of the molecules and equal to $-1$ on the remaining sites of a square lattice.

The objective of our analysis is to give a homogenized description of such a system through an approximate macroscopic energy which describes the typical collective mechanical behaviour of chiral molecules (see, e.g., \cite{FPE,H,Co}). 
The usual representation of the overall properties by a macroscopic spin variable or magnetization
\cite{P} is not possible, since the geometry of the system makes it of non-local type, as the fact that a site is occupied by a molecule of either type influences the system further than the nearest neighbours in a non-trivial way.
Moreover, such a simple representation would integrate out the asymmetric microscopic behaviours of the molecules.
For a better description, we have to define a new parameter
that captures the relevant properties of the microscopic arrangement of the molecules, 
in the spirit of recent works on lattice systems with microstructure \cite{ABC,BC-last}.

In order to define an overall macroscopic parameter 
we first note that we may equivalently represent chiral molecules as unions of unit squares centered on points of 
a square lattice as in \begin{figure}[ht]
\centering
\includegraphics[width=.27\textwidth]{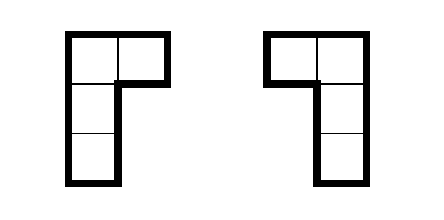}
\caption{Chiral molecules as union of squares.}\label{fig:molecole-chirali}
\end{figure}
Fig.~\ref{fig:molecole-chirali}. 
Correspondingly, the energy in \eqref{enSW} can
be viewed as the length of the boundary of the molecule not in contact with any other molecule.
The energy of a collection of molecules is then simply the total length of the boundary of the union of the molecules.
\begin{figure}[ht]
\centering
\includegraphics[width=.5\textwidth]{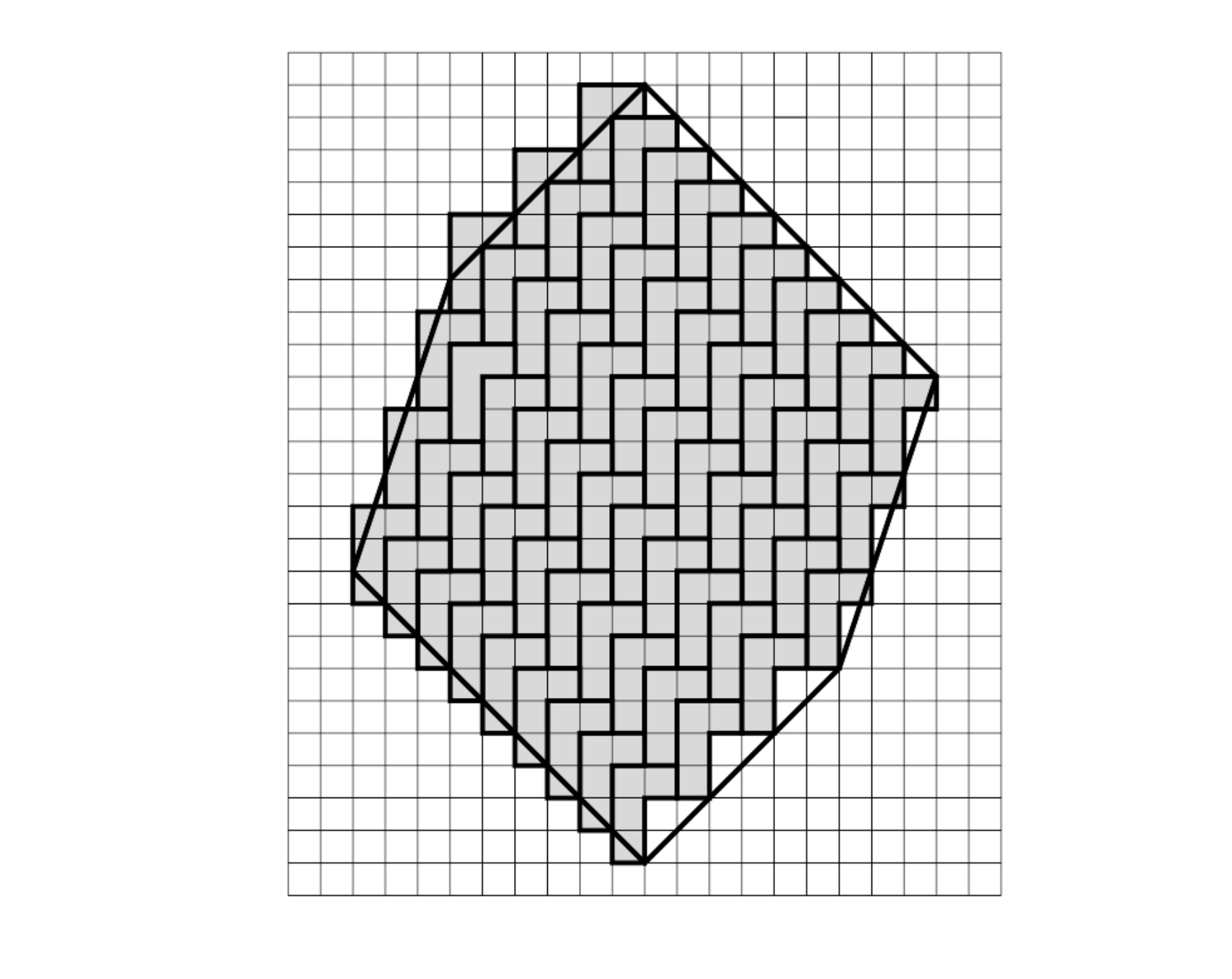}
\caption{An ensemble of a single type of molecules minimizing the perimeter of their union, and its continuous approximation}
\label{Fig25}
\end{figure}
We then examine the patterns of 
sets with minimal energy. In Fig.~\ref{Fig25} we picture a set composed of  a given number of one type of molecules
minimizing its total boundary length. 
We first make the simple observation that, whenever this is allowed by boundary conditions, configuration
of minimal energy replicate the pattern exhibited in that figure.
\begin{figure}[ht]
\centering
\includegraphics[width=.50\textwidth]{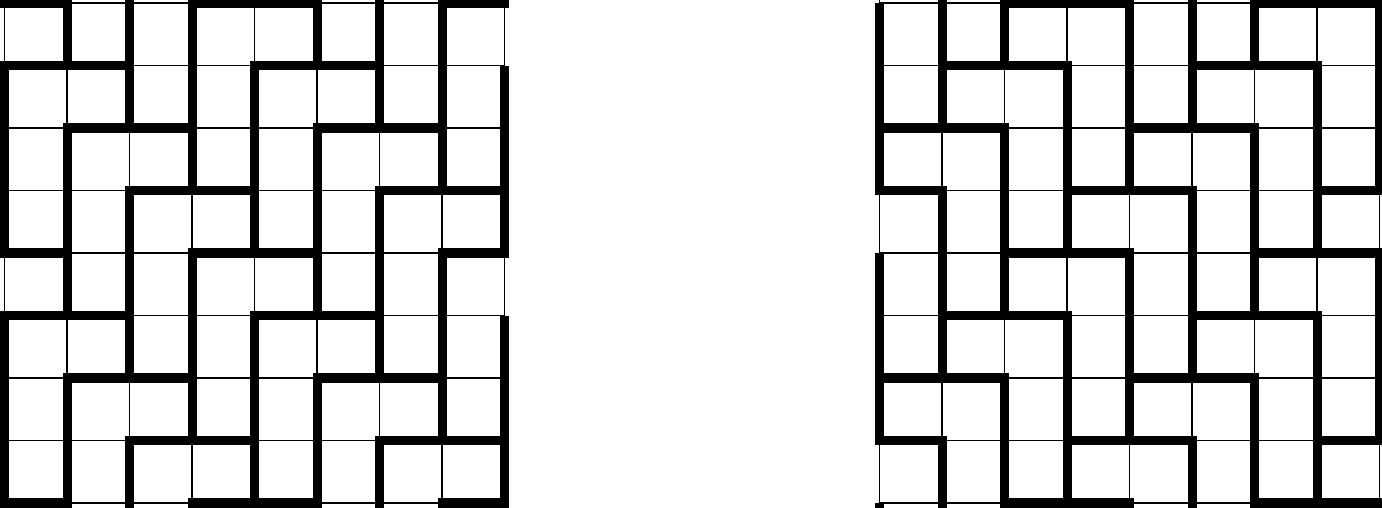}
\caption{Patterns of configurations with zero energy in a square}\label{Fig2}
\end{figure}
More precisely, we prove that configurations with zero energy inside an open set
either are the empty set (no molecule is present) or in the interior of that set they must correspond to a ``striped'' pattern of either of the two types in Fig.~\ref{Fig2}. These two patterns only are not sufficient to describe the behaviour of our energy, since the simultaneous presence of different translations of the same pattern will result in interstitial voids, and hence will have non-zero energy. We then remark that each pattern is four-periodic, and
translating the pattern vertically (or horizontally) we obtain all different arrangements with zero energy. As a consequence, for each pattern we have four ``modulated phases'' corresponding to the four translations.
\begin{figure}[ht]
\centering
\includegraphics[width=.12\textwidth]{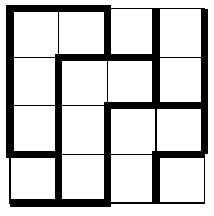}\qquad
\includegraphics[width=.12\textwidth]{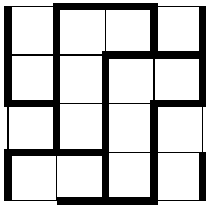}\qquad
\includegraphics[width=.12\textwidth]{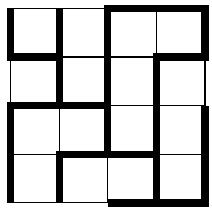}\qquad
\includegraphics[width=.12\textwidth]{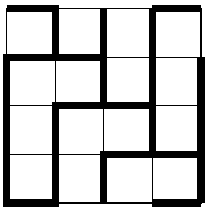}\qquad
\caption{Four distinct modulated phases 
restricted to the same periodicity cell}
\label{Fig3}
\end{figure}
In Fig.~\ref{Fig3} we reproduce the unit cells of the different phases corresponding to the same pattern.
In this way we have singled out eight different arrangements for the ground state, to which we have to add the trivial configuration with zero energy corresponding to the empty set. Note that the position of a single molecule determines the corresponding ground state.

In order to study the overall behaviour of a system of such chiral molecules, we 
follow a discrete-to-continuum approach, by scaling the system by a small parameter $\e$ and examine its behaviour as $\e\to0$.
We first give a notion of convergence of a family $E_\e$ of sets which are unions of scaled molecules of disjoint interior to a family $A_1,\ldots,A_8$ by decomposing the set $E_\e$ into the sets $E_{1,\e},\ldots,E_{8,\e}$ defined as the union of the molecules corresponding to one of the eight modulated phases, respectively, and requiring that the symmetric difference between $E_{j,\e}$ and the corresponding $A_j$ tends to $0$ on each compact set of ${\mathbb R}^2$. 
We prove that this notion is indeed compact: if we have a family $E_\e$ of such sets with boundary with equibounded length, then, up to subsequences, it converges in the sense specified above. This is a non-trivial fact since it derives from a bound on the length of the boundary of the union of the sets $E_{j,\e}$, and not on each subfamily separately.
We can nevertheless prove that each family $E_{j,\e}$ satisfies a similar bound on the length of the boundaries, and as a consequence is pre-compact as a family of sets of finite perimeter. 

We then turn our attention to the description of the limit behaviour of the energies ${\mathcal F}_\e(E)={\mathcal H}^1(\partial E)$ defined on unions of scaled molecules
with respect to the convergence $E_\e\to(A_1,\ldots, A_8)$ defined above.
It is convenient to introduce the set $A_0$, complement of the union of $A_1,\ldots, A_8$, which then corresponds to the limit of the complements of $E_\e$. In this way the completed family $\{A_0,\ldots, A_8\}$ is a partition into sets of finite perimeter, for whose interfacial energies there exists an established variational theory \cite{AB}. We then represent the
$\Gamma$-limit of the energies ${\mathcal F}_\e$ as 
$$
\F(A_0,\ldots, A_8)= \sum_{i\neq j=0}^8\int_{\partial A_i\cap\partial A_j} f_{ij} (\nu^i)d\H^{1},
$$
where $f_{ij}$ is an interfacial energy and $\nu^i$ is the measure-theoretical normal to 
$\partial A_i$. The functions $f_{ij}(\nu)$ are represented by an asymptotic homogenization formula which describes the optimal way to microscopically arrange the molecules between two macroscopic phases $A_i$ and $A_j$ in a way to obtain an average interface with normal $\nu$. Note that this optimization process may be achieved by the use of molecules corresponding to phases other than $A_i$ and $A_j$.

This process can be localized, requiring that all molecules be contained in a set $\Omega$. In this case the same description holds, upon requiring that the partition satisfies $\bigcup_{j=1}^8 A_j\subset \Omega$, or equivalently $A_0\supset {\mathbb R}^2\setminus \Omega$.

With the aid of the homogenization formula, we are able to actually compute the energy densities
$f_i=f_{i0}=f_{0i}$; i.e., with one of the two phases corresponding to the empty set. In that case, $f_i$ is a crystalline perimeter energy, whose Wulff shape is an irregular hexagon corresponding
to the continuous approximation of sets as in Fig.~\ref{Fig25}.

The paper is organized as follows. In Section \ref{Sect1} 
we introduce the necessary notation and prove the geometric Lemma \ref{quadrato-interno} 
which characterizes configurations with zero energy on an open set. With the aid of 
that result in Section \ref{Sect2} we define the discrete-to continuous convergence of 
scaled families of chiral molecules to partitions into nine sets of finite perimeter, and prove that
this is a compact convergence on families with equibounded energy. In Section \ref{Sect3} we
first define the limit interfacial energy densities through an asymptotic homogenization formula
and subsequently prove the $\Gamma$-convergence of the energies on scaled chiral systems
to the energy defined through those interfacial energy densities. We then compute the energy densities and the related Wulff shapes when one of the phases is the empty set, and describe the 
treatment of  anchoring boundary conditions. Finally, Section \ref{Sect4} is dedicated 
to generalization; in particular we remark that we may include a dependence
on the type of chiral molecule, in which case optimal configuration may develop wetting layers.
Another interesting observation is that we may consider as model energy the two-dimensional measure of $\R^2$ not occupied by a system of molecules (scaled by $1/\e$ for dimensional
reasons when scaling the molecules) in place of the one-dimensional measure of their boundary.
The analysis proceeds with minor changes except for the fact that the domain of the limit is
restricted to the eight non-empty phases.

\section{Geometric setting}\label{Sect1}
We will consider  $\R^2$ equipped with the usual scalar product, for which we use 
the notation $x\cdot y$. The Lebesgue measure of a set $E$ will be denoted by $|E|$; its 
$1$-dimensional Hausdorff measure by $\H^1(E)$.
Given $U\subset\R^2$ and  $x\in\R^2$, we denote by $U(x)$ the translation of $U$ 
by $x$; namely, $U(x):= x + U$.

\smallskip
We introduce the two {\em fundamental chiral molecules} as
$$
R:=\big( [0,1]\times[0,3]\big) \cup \big([1,2]\times [2,3]\big)\,, \qquad 
S:=\big(  [-1,0]\times[0,3]\big)  \cup \big(  [-2,-1]\times [2,3] \big) \,,
$$
corresponding to the two shapes in Fig.~\ref{fig:molecole-chirali}.
We will consider sets that can be obtained as a union of integer translations 
of one of these two cells with pairwise disjoint interior. We denote by ${\mathcal E}$ the collection of  families of sets defined as
$$
\mathcal E:= 
\left\{ 
\{E_j \}_j: 
E_{j} \in \{R(n),  S(n): \,n\in\Z^2\}, 
|E_{j}\cap E_{j'}|= 0  \hbox{ if } j\neq j'
\right\}.
$$
In this notation we do not specify the set of the indices $j$ since it will never be relevant in our arguments.
We may also simply write $\{E_j \}$ in the place of $\{E_j \}_j$ if no ambiguity arises. Each set of $E_j$ will
be referred to as a {\em molecule}.

We define $\A$ as the family of sets defined as unions of families in $\mathcal E$
$$
\A:= 
\Biggl\{ 
\bigcup_j E_j : 
\{E_j\}_j\in {\mathcal E}
\Biggr\}.
$$

We will sometimes need to define the union of the elements of a family $\mathcal B$ of sets.
In this case we simply write $\bigcup \mathcal B$ for $\bigcup_{B\in\mathcal B}B$. In particular,
then, $\bigcup\{E_j\}=\bigcup_j E_j$.

\smallskip

\def\ZZ{{\mathcal Z}}
In order to define the relevant macroscopic order parameter of the system,
we now prove that if a set $E\in{\mathcal A}$ has no boundary inside a (sufficiently
large) set, then, it must coincide with one single variant of a ground state as 
defined in the Introduction. In order to better formalize this property,
for each $i\in\{1,\dots, 8\}$, we introduce the family $\ZZ_i$
defined by  
\bea
\ZZ_i:= 
\begin{cases}
\{  R(n) : n\in\Z^2, n_2 + n_1 \equiv i \, \mod 4 \}   \text{ for }   i=1,\dots, 4,\\
 \{  S(n): n\in\Z^2, n_2 - n_1 \equiv i \, \mod 4 \}    \text{ for } i=5,\dots, 8.
\end{cases}
\eea

Clearly, it suffices to prove this property for squares.

For each $x\in\R^2$, $Q_{r}(x)$ stands for the open square of center $x$ and side length $r$.
In the case when $x=(0,0)$, we will simply write $Q_r$.

%
\begin{lemma}\label{quadrato-interno}
Let $\{E_j\}_j\in\E$, $n\in\Z^2, k\in\N$ with $k\geq 4$, and let $E = \bigcup E_j$. Suppose that $E \cap Q_{2k}(n)= Q_{2k}(n)$.
Then there exists  $i\in\{1,\dots, 8\}$ such that $E_j\in \ZZ_i$
for each $j$ such that $E_j\cap  Q_{2k{-4}}(n) \neq \emptyset$. 
\end{lemma}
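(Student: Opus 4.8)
The plan is to prove the statement by a local rigidity (forcing) argument propagated by induction across the filled square. As already remarked, it suffices to treat the square $Q_{2k}(n)$, and since an integer translation maps molecules to molecules and permutes the families $\ZZ_i$ (it only shifts the relevant residue modulo $4$), I may assume $n=0$. It is convenient to describe a molecule $R(m)$ as the union of the four unit cells $(m_1,m_2)$, $(m_1,m_2+1)$, $(m_1,m_2+2)$, $(m_1+1,m_2+2)$, and similarly for $S(m)$; the hypothesis $E\cap Q_{2k}=Q_{2k}$ then says exactly that every unit cell inside $Q_{2k}$ is covered by precisely one molecule of the family $\{E_j\}_j$.

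First I would establish an elementary forcing step. The foot of $R(m)$, the cell $(m_1+1,m_2+2)$, sits as an overhang above the two cells $(m_1+1,m_2)$ and $(m_1+1,m_2+1)$, both of which must be covered by molecules that use no cell of $R(m)$ itself. Enumerating the admissible molecules covering $(m_1+1,m_2+1)$ shows that the only one compatible with $R(m)$ which also reaches down to $(m_1+1,m_2)$ is the diagonal successor $R(m+(1,-1))$: every other candidate, and in particular the opposite-chirality molecule $S(m+(3,-1))$, covers $(m_1+1,m_2+1)$ but leaves $(m_1+1,m_2)$ with no admissible cover. In the same way the cell $(m_1,m_2+3)$ above the body forces the predecessor $R(m+(-1,1))$, and the cell $(m_1-1,m_2)$ to the left of the body forces a neighbour on the adjacent diagonal $R(m+(-2,-2))$. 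The symmetric statements hold for $S(m)$. In every case the forced neighbour has the same type and its translation vector preserves the residue class of $m_1+m_2$ (resp.\ $m_2-m_1$) modulo $4$, so it lies in the same family $\ZZ_i$ as the molecule we started from.

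With this step in hand I would run an induction over molecules. Choose any molecule $E_{j_0}$ meeting the centre of the square; from its position it belongs to a unique $\ZZ_i$. The forcing step shows that any molecule adjacent to an already-classified one is again in $\ZZ_i$, and because the vectors $(1,-1)$, $(-1,1)$ and $(-2,-2)$ generate the full index set of $\ZZ_i$ (the first two span the diagonal staircase, the last reaches the neighbouring diagonals), repeated application reaches every molecule of that family inside the region. The constants are exactly what control the reach: passing from $Q_{2k}$ to $Q_{2k-4}$ (with $k\ge 4$) guarantees that for any molecule meeting $Q_{2k-4}(n)$, it together with all the finitely many cells used in the forcing steps around it stays inside $Q_{2k}(n)$, where coverage is known. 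Hence every molecule meeting the inner square is classified, and all lie in the single $\ZZ_i$.

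I expect the main obstacle to be the case analysis inside the forcing step, and specifically the exclusion of chirality mixing. Cell by cell there are several a priori admissible covers of each boundary cell, including opposite-type ones of an $S$ next to an $R$, so no single cell determines the continuation. The content of the argument is that each such spurious completion, while it does cover the cell under consideration, necessarily isolates an adjacent unit cell that no remaining molecule can fill, contradicting $E\cap Q_{2k}=Q_{2k}$. Organizing these exclusions so that they simultaneously pin down the type, the diagonal, and the residue class — while keeping all the cells involved inside the filled region — is the delicate part; the remainder is bookkeeping of translation vectors modulo $4$.
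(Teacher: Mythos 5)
Your overall strategy---force the diagonal neighbours of a given molecule, then force the adjacent diagonals, and propagate by a finite induction while tracking the residue of $m_1+m_2$ (resp.\ $m_2-m_1$) modulo $4$---is exactly the paper's (its Steps 1--4), and your down-right forcing is correct as stated: the only covers of the cell $(m_1+1,m_2+1)$ compatible with $R(m)$ are $R(m+(1,-1))$ and $S(m+(3,-1))$, and the latter does leave $(m_1+1,m_2)$ uncoverable.

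The gap is the claim that the other two forcings work ``in the same way,'' i.e.\ that every spurious candidate is excluded by exhibiting a single adjacent cell with no admissible cover. For the cell $(m_1,m_2+3)$ above the body, the admissible covers are $R(m+(-1,1))$, $R(m+(0,3))$ \emph{and} $S(m+(1,3))$, and placing $S(m+(1,3))$ isolates nothing by itself: to exclude it one must first propagate it along its own diagonal (the easy forcing applied to $S$-molecules) and only then find an uncoverable cell, which can sit several units away from $m$. This is precisely where the hypothesis $k\ge 4$ and the shrinkage from $Q_{2k}(n)$ to $Q_{2k-4}(n)$ are consumed (the paper's Step 2, Fig.~5(c),(d)), and your proposal accounts for this only with ``the constants control the reach.'' Worse, for the cell $(m_1-1,m_2)$ to the left of the body there are \emph{three} admissible opposite-chirality covers, $S(m)$, $S(m+(0,-1))$ and $S(m+(0,-2))$, none of which isolates an adjacent cell at all; the paper excludes them (Step 4) by extending each candidate one step along its $S$-diagonal---e.g.\ $S(m)$ forces $S(m+(1,1))$, whose body overlaps $R(m)$---i.e.\ by colliding with molecules forced by the \emph{already completed} diagonal through $m$. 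That argument requires the whole diagonal through $m$ to have been established first, so the induction cannot be run as a single uniform ``a classified molecule forces its three neighbours along $(1,-1),(-1,1),(-2,-2)$'' step: it must be staged (complete each diagonal in both directions, then pass to the adjacent stripe), which is how the paper organizes the proof. Until the exclusion of these opposite-chirality candidates is actually carried out, the central forcing step of your argument is unproved, and in the form you state it (one-cell isolation in all three directions) it is false.
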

\begin{proof}

{\it Step $1$}. Let $E_{\bar{\j}} \subset Q_{2k}(n)$. 
\begin{figure}[ht]
\centering
\includegraphics[width=.6\textwidth]{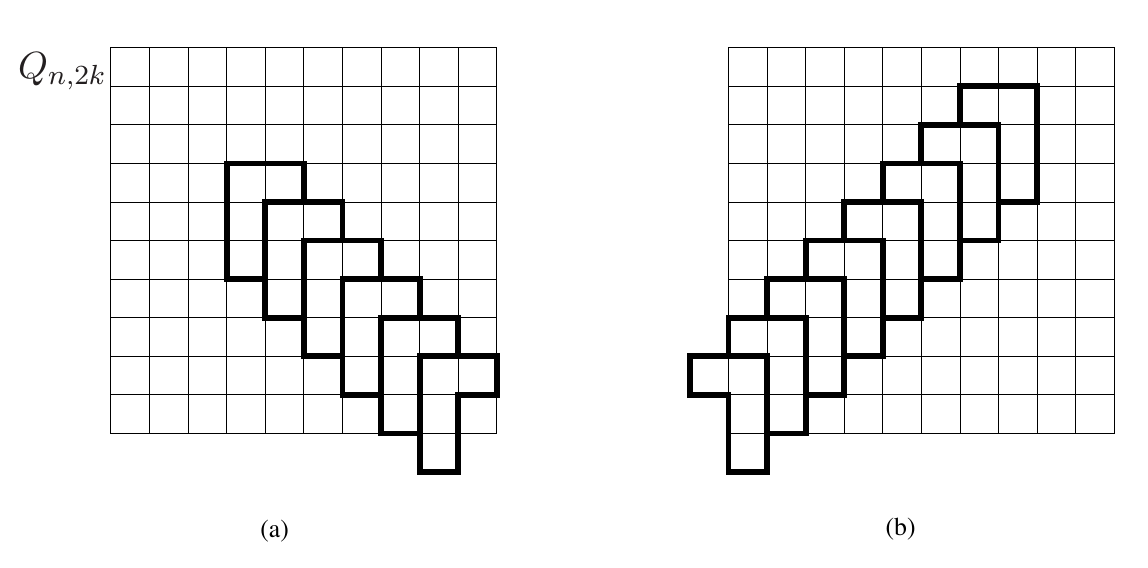}
\caption{Diagonal minimal patterns}
\label{Fig4}
\end{figure}
If $E_{\bar{\j}}$ is a translation of $R$, say $R(\bar{\j}_1,\bar{\j}_2)$,
then $R(\bar{\j}_1+1,\bar{\j}_2-1)$ (the translation of $E_{\bar{\j}}$
by $(1,-1)$) is also part of the family $\{E_j\}$. Indeed if it were not so
then the square $[0,2]\times[0,2]+(\bar{\j}_1,\bar{\j}_2)$ 
(i.e., the square defining the upper-left corner of $R(\bar{\j}_1+1,\bar{\j}_2-1)$)
would belong to an element of the form $S(\bar{\i}_1,\bar{\i}_2)$.
But then the square $[0,2]\times[0,1]+(\bar{\j}_1,\bar{\j}_2)$ would not belong 
to $E$, which is a contradiction. By proceeding by induction we deduce
that among the sets $E_j$ there are all the translations of $E_{\bar{\j}}$
in direction $(1,-1)$ contained in $Q_{2k+2}(n)$; i.e.,
all the sets of the form $R(\bar{\j}_1+t,\bar{\j}_2-t)$, with $t\in\N$, as long as 
$R(\bar{\j}_1+t,\bar{\j}_2-t)\subset Q_{2k+2}(n)$ (see Fig.~\ref{Fig4}a).

{\em Step $2$}. We now prove that also the translations $R(\bar{\j}_1-t,\bar{\j}_2+t)$ with $t\in\N$
(i.e., also the translations of $E_{\bar{\j}}$ in direction $(-1,1)$) belong to the family
$\{E_j\}$ as long as $R(\bar{\j}_1-t,\bar{\j}_2+t)\subset Q_{2k-2}(n)$. 
We can proceed by finite induction. It suffices to consider the case $R(\bar{\j}_1,\bar{\j}_2)\subset Q_{2k-4}(n)$ 
and prove that $R(\bar{\j}_1-1,\bar{\j}_2+1)$ belongs to 
the family $\{E_j\}$. 
\begin{figure}[ht]
\centering
\includegraphics[width=.9\textwidth]{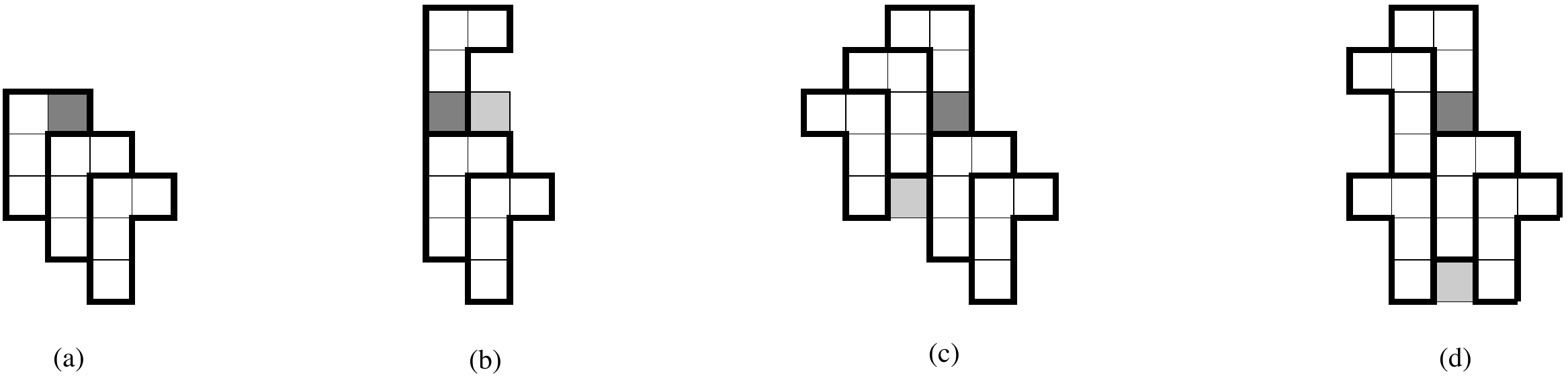}
\caption{Necessity of upper-left translations}
\label{Fig5}
\end{figure}
We suppose otherwise and argue by contradiction.
Referring to Fig.~\ref{Fig5} for a visual interpretation of
the proof, we note that the square $Q=[\bar{\j}_1, \bar{\j}_1+1]\times [\bar{\j}_2+3, \bar{\j}_2+4]$ (the dark gray square in Fig.~\ref{Fig5}) belongs to $E$. 
If it belonged to a molecule $R(\bar{\i})$ (the case in Fig.~\ref{Fig5}(b))
then this molecule should be $R(\bar{\j}_1,\bar{\j}_2+3)$. In this case, the neighbouring square $[\bar{\j}_1+1, \bar{\j}_1+2]\times [\bar{\j}_2+3, \bar{\j}_2+4]$ would not belong to $E$. Since this is not the case, $Q$ must belong to some molecule $S(\bar{\i})$ belonging to $\{E_j\}$, which is the one pictured in Fig.~\ref{Fig5} {(}c) and (d). 
Then also $S(\bar{\i}+(-1,-1))$ must belong to $\{E_j\}$ (for the same reasoning as
in Step $1$). We then have two possibilities, pictured in Fig.~\ref{Fig5}{(}c) and (d), respectively: either $S(\bar{\i}+(-2,-2))$ belongs to $\{E_j\}$, in which case the light gray square in Fig.~\ref{Fig5}{(}c) does not belong to $E$, or 
$S(\bar{\i}+(-1,-4))$ belongs to $\{E_j\}$, in which case the light gray square in Fig.~\ref{Fig5}{(}d) does not belong to $E$. Note that in the latter case we reach a 
contradiction if the light gray square in Fig.~\ref{Fig5}{(}d) also belongs to
$Q_{2k}(n)$. To this end we use the assumption $k>3$.

\smallskip
{\it Step $3$}. We can reason symmetrically if $E_{\bar{\j}}$ is a translation of $S$, say $S(\bar{\j}_1,\bar{\j}_2)$.

From Steps 1 and 2 we deduce that if $R(\bar{\j}_1,\bar{\j}_2)\subset Q_{2k}(n)$
is part of the family $\{E_j\}$ then all the translations $R(\bar{\j}_1+t,\bar{\j}_2-t)$ 
contained in $Q_{2k}(n)$ with $t\in\Z$ belong to the family $\{E_j\}$, and symmetrically that if
$S(\bar{\j}_1,\bar{\j}_2)\subset Q_{2k}(n)$
is part of the family $\{E_j\}$ then all the translations $S(\bar{\j}_1+t,\bar{\j}_2+t)$ 
contained in $Q_{2k}(n)$ 
with $t\in\Z$ belong to the family $\{E_j\}$

\smallskip
{\it Step $4$}. Consider now a set $E_j$ with $E_j\cap Q_{2k-4}(n)\neq \emptyset$. 
We may suppose again $E_j=R(\bar\j)$. From the previous steps also the sets
$R(\bar{\j}_1+t,\bar{\j}_2-t)$ intersecting $Q_{2k-4}(n)$ with $t\in\Z$ belong to the family $\{E_j\}$. 
Consider a unit square $Q$ in $Q_{2k-4}(n)$ neighbouring some of those $R(\bar{\j}_1+t,\bar{\j}_2-t)$. If it belonged to some $S(\bar\i)$ belonging to the family $\{E_j\}$ then by the previous steps the set $S(\bar\i+(-1,-1))$ would belong to the family $\{E_j\}$ (if $Q$ lies above some $R(\bar{\j}_1+t,\bar{\j}_2-t)$) 
or the set $S(\bar\i+(1,1))$ would belong to the family $\{E_j\}$ (if $Q$ lies below some $R(\bar{\j}_1+t,\bar{\j}_2-t)$).
In any case we would have a non-empty intersection between two elements of $\{E_j\}$,
which is a contradiction.
This implies that each such $Q$ belongs to a set $R(\bar\i)$ of the same modulated phase of $R(\bar\j)$. This gives that the two stripes neighboring the one of $R(\bar\j)$
are of the same modulated phase. Proceeding by finite induction we conclude that
all $E_j$ intersecting $Q_{2k-4}(n)$  belong to the same modulated phase.
\end{proof}

\begin{remark}\rm
It can be proved that the thesis of Lemma \ref{quadrato-interno} holds with $E_j\cap Q_{2k{-2}}(n) \neq \emptyset$. However the proof is more involved and we will not need such a sharp description.
\end{remark}

From Lemma \ref{quadrato-interno} we deduce that it is not possible to tessellate 
$\R^2$ using disjoint translations of both $R$ and $S$, or of different modulated phases 
of the same pattern, as stated in the following corollary.
\begin{corollary}\label{conf-periodiche}
Let $E = \bigcup E_j\in\A$ and suppose that $E = \R^2$.
Then there exists  $i\in\{1,\dots, 8\}$ such that $E_j\in \A_i$ for all $j$, or, equivalently,
$\{E_j\}=\ZZ_i$.
\end{corollary}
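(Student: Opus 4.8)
The plan is to deduce the corollary from Lemma~\ref{quadrato-interno} by a ``gluing'' argument that propagates a single modulated phase across the whole plane. First I would observe that the hypothesis $E=\R^2$ makes the assumption $E\cap Q_{2k}(n)=Q_{2k}(n)$ of the lemma hold for \emph{every} $n\in\Z^2$ and every $k\ge 4$. Fixing once and for all some $k\ge 4$, the lemma then associates to each $n\in\Z^2$ an index $i_n\in\{1,\dots,8\}$ with the property that $E_j\in\ZZ_{i_n}$ for every molecule $E_j$ meeting $Q_{2k-4}(n)$. The corollary will follow once I show that $i_n$ does not depend on $n$.

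To see this I would argue that $n\mapsto i_n$ is locally constant on $\Z^2$. Given two lattice points $n$ and $n'$ with $|n-n'|=1$, for the chosen $k$ the inner squares $Q_{2k-4}(n)$ and $Q_{2k-4}(n')$ overlap in a region of positive area; since $E=\R^2$, that overlap meets at least one molecule $E_j$. By the defining property of $i_n$ and $i_{n'}$ this single $E_j$ lies simultaneously in $\ZZ_{i_n}$ and in $\ZZ_{i_{n'}}$. The families $\ZZ_1,\dots,\ZZ_8$ are pairwise disjoint — an $R$-molecule and an $S$-molecule are never equal, and for indices of the same type the congruence conditions on $n_1\pm n_2 \bmod 4$ are mutually exclusive — so necessarily $i_n=i_{n'}$. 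Since $\Z^2$ is connected under nearest-neighbour adjacency, $i_n\equiv i$ is constant, and hence every molecule of $\{E_j\}$ (each of which meets some $Q_{2k-4}(n)$) belongs to the fixed family $\ZZ_i$.

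It remains to upgrade the inclusion $\{E_j\}\subseteq\ZZ_i$ to the equality $\{E_j\}=\ZZ_i$. Here I would use that $\ZZ_i$ itself is a tessellation of $\R^2$: its members have pairwise disjoint interiors and $\bigcup\ZZ_i=\R^2$. If some $Z\in\ZZ_i$ were not among the $E_j$, then, as all the $E_j$ have interior disjoint from $Z$, the interior of $Z$ would be left uncovered, contradicting $\bigcup_j E_j=\R^2$; hence $\{E_j\}=\ZZ_i$, which is the asserted conclusion (equivalently $E_j\in\A_i$ for all $j$). The main obstacle I anticipate is the consistency step: one must make sure the overlap of the two inner squares genuinely contains a molecule and then invoke the pairwise disjointness of the phases cleanly; the finishing tessellation argument is routine once the explicit periodic structure of $\ZZ_i$ recorded in the Introduction is used.
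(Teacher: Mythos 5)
Your proof is correct and follows the route the paper intends: the paper states the corollary as an immediate consequence of Lemma~\ref{quadrato-interno} without writing out the deduction, and your argument (apply the lemma at every center $n$, use the overlap of the inner squares plus the pairwise disjointness of the families $\ZZ_1,\dots,\ZZ_8$ to see the phase index is constant, then use that $\ZZ_i$ tiles $\R^2$ to upgrade inclusion to equality) is exactly the natural way to make that deduction rigorous. The only stylistic variant is that one could instead fix the center and let $k\to\infty$, but this is the same idea.
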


\begin{remark}[other types of chiral molecules]\rm
\begin{figure}[ht]
\centering
\includegraphics[width=.5\textwidth]{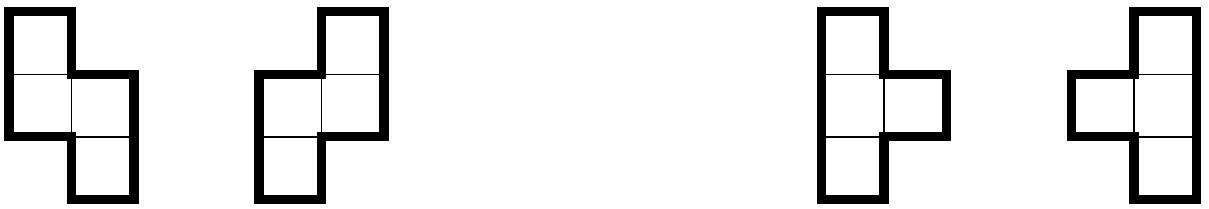}
\caption{Other types of chiral molecules}
\label{more-mol}
\end{figure}
In \cite{SW} other pairs of chiral molecules occupying four sites of a square lattice
are considered.
Such pairs, represented as union of squares, are pictured in Fig.~\ref{more-mol}.
\begin{figure}[ht]
\centering
\includegraphics[width=.85\textwidth]{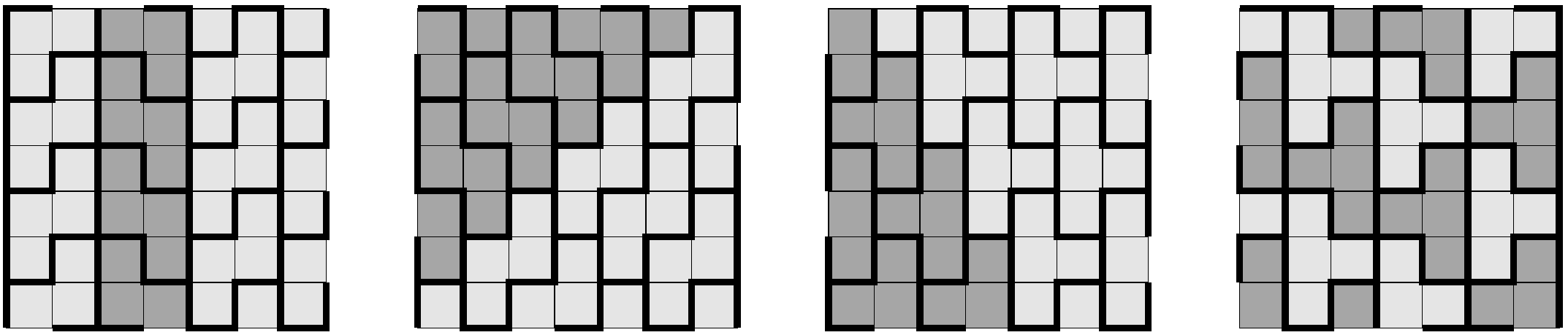}
\caption{Configurations violating Lemma \ref{quadrato-interno}}
\label{no-lemma}
\end{figure}The energy per molecule is again given by \eqref{enSW}, but in this case 
Lemma~\ref{quadrato-interno} does not hold, as shown by the configurations in 
Fig.~\ref{no-lemma}. As a result we do not have a parameterization of ground states
that can be used to define a compact convergence as in the next section.
\end{remark}

\section{Convergence to a partition}\label{Sect2}
Let $\Om\subset \R^2$ be an open bounded set with Lipschitz-continuous boundary. 
For each $\e>0$, we define ${\mathcal E}^\e(\Omega)$ as the collection of families of 
essentially disjoint unions of translations of $\e R$ and $\e S$ defined as
$$
{\mathcal E}^\e(\Omega):= 
\left\{ 
\{E^\e_j \}_j: 
\hbox{there exists } \{E_j \}_j\in {\mathcal E}\hbox{ such that }  E^\e_j =\e E_j
 \hbox{ for all }j \hbox{, and } E^\e_j\subset\Omega
\right\}.
$$
We denote by $\A^\e(\Omega)$ as the family of sets defined as unions of families in $\mathcal E^\e(\Omega)$; i.e., 
$$
\A^\e(\Omega):= 
\Biggl\{ 
\bigcup_j E_j : 
\{E_j\}_j\in {\mathcal E}_\e(\Omega)
\Biggr\}.
$$

We also define 
\bes
\ZZ_i^\e:= 
\begin{cases}
\{  \e R(n) : n\in\Z^2, n_2 + n_1 \equiv i \, \mod 4 \}   \text{ for }   i=1,\dots, 4,\\
 \{  \e S(n):  n\in\Z^2, n_2 - n_1 \equiv i \, \mod 4 \}    \text{ for } i=5,\dots, 8.
\end{cases}
\ees
Furthermore, we denote by $\P(\Om)$ the family of ordered partitions $(A_0,A_1,\dots, A_8)$ of 
$\Om$ into nine sets of finite perimeter.
%
\begin{definition}\label{def-convergenza}
We say that a sequence $ \{E^\e_j\}  \in \E^\e(\Omega)$ 
converges to $A\in \P(\Om)$, and we write $\{E^\e_j\} \to A$, if 
\begin{align}
\label{seconda} & |{\Om_i^\e}\triangle {A_i}|\to 0 \text{ as }\e\to 0,
\end{align}
where $\Om_i^\e := \bigcup \{ E_j^\e \colon  E_j^\e \in \ZZ_i^\e \}$ for each $i=1,\dots,8$, 
and $\Om_0^\e := \Om\setminus \big(\bigcup_{i=1}^8 \Om_i^\e \big)$.
\end{definition}

This notion of convergence is justified by the following compactness result.

\begin{theorem}[compactness]\label{te-co}
Assume that  $\{E_j^\e\}_j \in \E^\e(\Omega)$ is  such that, 
having set $E^\e= \bigcup_jE_j^\e$, we have 
\begin{equation}\label{equibounded}
C=\sup_\e \H^1 \big(\Om\cap\partial E^\e \big)<+\infty. 
\end{equation}
Then (up to relabeling) there exists a subsequence 
$\{E^\e_j\}$ converging to some $A\in\P(\Om)$
in the sense of Definition {\rm\ref{def-convergenza}}.
\end{theorem}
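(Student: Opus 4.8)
The plan is to reduce the convergence of the family $\{E^\e_j\}$ to the separate pre-compactness of each of the eight subfamilies $\Om_i^\e$ as sets of finite perimeter, and then to invoke the standard compactness theorem for sets of equibounded perimeter. The subtlety, as the authors emphasize, is that the hypothesis \eqref{equibounded} controls only $\H^1(\Om\cap\partial E^\e)$ for the \emph{union} $E^\e=\bigcup_j E^\e_j$, and not the perimeter of each $\Om_i^\e$ individually. So the core of the argument is to derive, from the single bound on $\H^1(\Om\cap\partial E^\e)$, a corresponding bound on $\H^1(\Om\cap\partial\Om_i^\e)$ for each $i=1,\dots,8$ (and hence also for $\Om_0^\e$).

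First I would fix $\e$ and analyze the geometry of the decomposition $E^\e=\bigcup_{i=1}^8\Om_i^\e$. The key structural input is Lemma \ref{quadrato-interno}: wherever $E^\e$ fills a square $\e Q_{2k}(n)$ entirely, all the molecules meeting the slightly smaller square $\e Q_{2k-4}(n)$ belong to a single phase $\ZZ_i^\e$. Consequently, in any region where $E^\e$ has no boundary, the phase label $i$ is locally constant; the boundary $\partial\Om_i^\e$ can arise only either as part of the outer boundary $\partial E^\e$ (where molecules of phase $i$ touch the complement $\Om_0^\e$) or along \emph{internal interfaces} where phase $i$ meets a different phase $i'$. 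Near such an internal interface, the lemma forces a mismatch of patterns, which by the impossibility of tessellating without voids (Corollary \ref{conf-periodiche}) must be accompanied by interstitial voids, i.e.\ by genuine boundary of $E^\e$ nearby. The crucial estimate to establish is therefore a \emph{local} inequality of the form
\begin{equation*}
\H^1\big(\partial\Om_i^\e\cap\partial\Om_{i'}^\e\big)\le C'\,\H^1\big(\partial E^\e\cap N\big),
\end{equation*}
where $N$ is an $O(\e)$-neighbourhood of that interface, quantifying that a length of internal interface forces a proportional amount of void boundary within $O(\e)$. Summing such local contributions, and adding the part of $\partial\Om_i^\e$ lying on $\partial E^\e$ itself, would yield $\sup_\e\H^1(\Om\cap\partial\Om_i^\e)<+\infty$ for each $i$.

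Once this per-phase perimeter bound is in hand, the conclusion is routine: the sets $\Om_i^\e$ are contained in the bounded set $\Om$ and have equibounded perimeters, so by the compactness theorem for sets of finite perimeter (and by truncating/localizing to exhaust $\R^2$ by compact sets if needed) there is a subsequence along which each $\Om_i^\e\to A_i$ in $L^1_{\mathrm{loc}}$, i.e.\ $|\Om_i^\e\triangle A_i|\to 0$. I would extract a single subsequence working simultaneously for all $i=1,\dots,8$ by a diagonal argument. Setting $A_0:=\Om\setminus\bigcup_{i=1}^8 A_i$, the limits $A_0,\dots,A_8$ are sets of finite perimeter; since the $\Om_i^\e$ have pairwise negligible intersections and cover $\Om$, the same holds in the limit, so $(A_0,\dots,A_8)\in\P(\Om)$, and \eqref{seconda} holds by construction. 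This gives $\{E^\e_j\}\to A$ in the sense of Definition \ref{def-convergenza}.

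The main obstacle is precisely the local estimate bounding internal interface length by nearby void boundary. One must make quantitative the qualitative dichotomy of Lemma \ref{quadrato-interno}: a clean argument would cover $\Om\cap\partial\Om_i^\e$ by a controlled family of squares of side $O(\e)$, and show that any such square straddling an interface between two distinct phases (or between a phase and $\Om_0^\e$) contains a definite fraction of $\partial E^\e$, with bounded overlap so that summation loses only a constant factor. Controlling the combinatorics of how phases can abut, and ruling out configurations where two distinct non-empty phases meet along a long interface with \emph{no} accompanying void, is where the geometric content of Steps 1--4 of Lemma \ref{quadrato-interno} is really used.
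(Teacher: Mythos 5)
Your overall strategy coincides with the paper's: reduce \eqref{equibounded} to an equibounded-perimeter statement for each phase separately, invoke standard compactness for sets of finite perimeter, and observe that the only real difficulty is that a long interface between two distinct nonempty phases must be paid for by nearby boundary of $E^\e$. However, you state this crucial quantitative estimate as ``the main obstacle'' and only describe what one \emph{would} do to prove it; since this estimate is the entire content of the theorem (everything before and after it is routine, as you say yourself), the proposal as written has a gap exactly at the decisive step.

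For the record, the paper closes this gap by a counting argument that you gesture at but do not carry out. Cover $\R^2$ by the squares $Q^\e_{12}(n)$, $n\in 4\Z^2$, and call $n$ \emph{bad} if $\H^1(Q^\e_{12}(n)\cap\partial E^\e)\ge\e$ or $Q^\e_{12}(n)$ meets $\partial\Om$; the energy bound and the bounded overlap of the cover give $\sharp\{\hbox{bad }n\}\le 18C/\e$, so the union of the bad squares has area $O(\e)$. On a good square $Q^\e_{12}(n)$ there is no boundary of $E^\e$ at all (the boundary is made of lattice edges of length $\e$), so $E^\e$ either misses or fills it, and Lemma \ref{quadrato-interno} assigns a single label $i_\e(n)$ to all molecules meeting $Q^\e_{8}(n)$. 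The paper then works not with $\Om^\e_i$ directly but with the coarsened sets $A^\e_i=\bigcup\{Q^\e_4(n): n\hbox{ good, } i_\e(n)=i\}$: two adjacent good squares necessarily carry the same label (apply the lemma in the larger square of either one), so every boundary edge of $A^\e_i$ abuts a bad square, whence $\H^1(\partial A^\e_i)\le 16\e\,\sharp\{\hbox{bad }n\}\le 288\,C$ uniformly in $\e$. Since $\Om^\e_i\triangle A^\e_i$ is contained in the union of the bad squares, which is negligible, compactness for $A^\e_i$ transfers to $\Om^\e_i$, and the rest of your argument (diagonal subsequence, $A_0$ as complement, the partition property) goes through as you wrote it. Your alternative of bounding $\H^1(\partial\Om^\e_i)$ itself would also work with the same covering and counting, but in either form the estimate must actually be proved, not deferred.
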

\begin{proof}
We use the notation
$Q^\e_{r}(x)$ for the rescaled cube $\e Q_{r}(x)= Q_{\e r}(\e x)$. 

Introduce a cover $\{Q^\e_{12}(n)\}_{n\in 4\Z^2}$ of $\R^2$ using squares of sides $12\e$ and center a point 
$\e n\in  4\e \Z^2$. 

Set 
\begin{align*}
& I^\e:=\{  n\in 4\Z^2: Q^\e_{12(n)}\cap\Om\neq\emptyset \},\\
&\hat{I}^\e:=\left\{  n\in  I^\e:  \H^1(Q^\e_{12}(n) \cap \partial E^\e)  \ge \e  \right\} \cup 
\{ n\in  I^\e:  Q^\e_{12}(n)\cap\partial\Om\neq\emptyset\}.
\end{align*}

From \eqref{equibounded} it follows that
\begin{equation}\label{stir}
\sharp \,  \hat{I}^\e\le {1\over\e}\sum_{n\in \hat{I}^\e}
 \H^1(Q^\e_{12}(n) \cap \partial E^\e)
 \le {9\over\e} \sum_{m\in 4 \Z^2}
 \H^1(\overline{Q^\e_{4}(m)} \cap \partial E^\e)
 \le 18 \frac{C}{\e},
\end{equation}
where the factor $18$ derives from the fact that a square $Q^\e_{4}(m)$
is contained in nine squares $Q^\e_{12}(n)$, and those squares have parts of
the boundary in common in pairs, so that the boundary of $\partial E^\e$ may be accounted for
twice in the last inequality.
As a consequence, if we denote
$$
\hat\Omega^\e=\bigcup_{n\in \hat{I}^\e} Q^\e_{12}(n),
$$
then
we have
\be\label{area-perimetro}
\bigl|\hat\Omega^\e\bigr|\le 144\,\e^2\cdot 18 \frac{C}{\e}= O(\e),
\ee
so that this set is negligible as the convergence of $\{E^\e_j\}_j$ is concerned.

For each $n\in I^\e\setminus\hat{I}^\e$ we apply Lemma \ref{quadrato-interno} to $Q^\e_{12}(n)$ and deduce that the corresponding $Q^\e_{4}(n)$ satisfies:
either 
\be\label{insieme1}
E^\e\cap Q^\e_{4}(n) = \emptyset,
\ee
or there exists  $i_\e(n) \in\{1,\dots, 8\}$ such that 
\be\label{insieme2}
E_j^\e\in \ZZ^\e_{i_\e(n)}
\ee
for each $j$ such that $E_j^\e\cap  Q^\e_{4}(n) \neq \emptyset$.
For all $i\in\{1,\dots, 8\}$ we define
$$
A^\e_i=\bigcup\Bigl\{ Q^\e_{4}(n): n\in I^\e_i\},\hbox{ where }  I^\e_i=\{n\in I^\e\setminus\hat{I}^\e, i_\e(n)=i\}
$$
and $i_\e(n)$ is defined in (\ref{insieme2}).
We can estimate the length of $\partial A^\e_i$ by counting the number of
the cubes of which it is composed which have a side neighboring a cube in the complement times the length $4\e$ of the corresponding interface, as
\begin{eqnarray*}
\H^1(\partial A^\e_i)&\le &4\e\sharp\{ n\in I^\e_i: \hbox{ there exists }n'\in 4\Z^2: |n-n'|=4
\hbox{ and } n'\not\in I^\e_i\}\\
&\le&
4\e\sharp\{ n\in I^\e_i: \hbox{ there exists }n'\in 4\Z^2: |n-n'|=4
\hbox{ and } n' \in \hat I^\e\}.
\end{eqnarray*}
Indeed if $n' \not\in \hat I^\e$ then by Lemma \ref{quadrato-interno} we would deduce
that $E^\e_j \in \ZZ_{i_\e(n')}$ for all $j$ such that $E_j^\e\subset Q^\e_{8}(n')$
and in particular for some  $j$ such that $E_j^\e\subset  Q^\e_{8}(n')$ and 
$E_j^\e\cap  Q^\e_{4}(n) \neq \emptyset$, which then implies that $i_\e(n')=i_\e(n)$.
We can then conclude our estimate using (\ref{stir}), and deduce (since the
number of possible neighbours $n'$ is $4$) that
\begin{eqnarray*}
\sup_\e\H^1(\partial A^\e_i)\le \sup_\e16\e\,\sharp\hat I^\e\le \sup_\e16\e\,{18 C\over\e}
<+\infty.
\end{eqnarray*}
By the compactness of (bounded) sequences of equi-bounded perimeter,
we deduce that there exist sets of finite perimeter $A_1,\ldots, A_8$ such that
(\ref{seconda}) holds for $i=1,\ldots, 8$, since 
$$
\Omega^\e_i\setminus A^\e_i\subset \hat\Omega_\e,
$$
which we already proved to be negligible. We finally deduce that (\ref{seconda}) holds for $i=0$ by the convergence of the complement of $\Omega^\e_0$.
\end{proof}
%
%
\section{Asymptotic analysis}\label{Sect3}
We now describe the asymptotic behaviour of perimeter energies defined on families
of mole\-cules. We will treat in detail a fundamental case, highlighting possible extensions and variations in the sequel.

For all $\{E_j\}\in  \E^\e$ we set
\begin{equation}\label{fhome}
\F^\e(\{E_j\},\Om) := \H^1 \big(\Om\cap\partial E \big),\hbox{ where }E=\bigcup_j E_j.
\end{equation}
We will prove that the asymptotic behaviour of $\F^\e$ as $\e\to0$ is described by an interfacial energy defined on partitions parameterized by the nine ground states described above. To that end, we first give a definition of the limit interfacial energy density by means of an asymptotic homogenization formula.

\subsection{Definition of the energy densities}
Given a unit vector $\nu\in S^1$ and $i,j\in\{0,1,\ldots,8\}$ with $i\neq j$, we define 
the family $\{E_h^{i,j,\nu}\}$ as follows. If $i\neq 0$ then
\begin{equation}\label{Ei0nu}
\{E_h^{i,0,\nu}\}=\{E\in\ZZ_i: E\cap\{x: x\cdot \nu>2\}\neq\emptyset\},
\end{equation}
\begin{equation}\label{E0jnu}
\{E_h^{0,i,\nu}\}= \{E_h^{i,0,-\nu}\}=\{E\in\ZZ_i: E\cap\{x: x\cdot \nu<-2\}\neq\emptyset\};
\end{equation}
i.e., $\{E_h^{i,0,\nu}\}$ is the family composed of elements of $\ZZ_i$ internal to
the half-plane $\{x: x\cdot \nu>0\}$, and symmetrically $\{E_h^{0,i,\nu}\}$ is the family 
composed of elements of $\ZZ_i$ internal to the half-plane $\{x: x\cdot \nu<0\}$.
If $i\neq0$ and $j\neq 0$ then 
\begin{equation}\label{Eijnu}
\{E_h^{i,j,\nu}\}= \{E_h^{i,0,\nu}\}\cup \{E_h^{0,j,\nu}\}.
\end{equation}
In this way we have defined the family $\{E_h^{i,j,\nu}\}$ for all $i\neq j\in\{0,\ldots,8\}$.
Note that $\{E_h^{j,i,\nu}\}=\{E_h^{i,j,-\nu}\}$.

The families defined above will allow to give a notion of boundary datum for 
minimum-interface problems on invading cubes.
More precisely, for all $T>0$ we define the neighbourhood of $\partial Q_T$
\begin{equation}\label{bordo} 
\delta Q_T := \{ x\in Q_T \colon \dist(x; \partial Q_T)< 4 \}.
\end{equation}
If $i\neq j$ then we set
\begin{eqnarray}\label{aij}
a_{i,j}(T,\nu)&:=&
 \min\Big\{
\F^1(\{E_k\},Q_{T})  \colon E_k\in\{E_h^{i,j,\nu}\} \hbox{ if } E_k\cap \delta Q_T\neq\emptyset
\Big\},
\end{eqnarray}
where $\F^1$ is defined in (\ref{fhome}) with $\e=1$. 

\begin{figure}[ht]
\centering
\includegraphics[width=.30\textwidth]{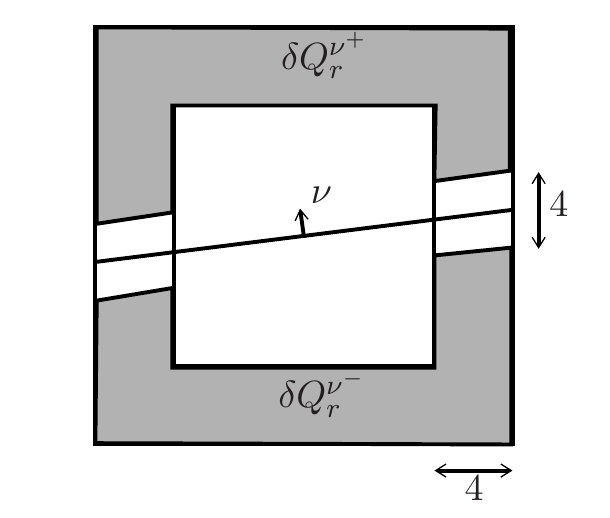}
\caption{Sets used in the definition of boundary conditions}
\label{Fig6}
\end{figure}
In order to illustrate this minimum problem, we refer to Fig.~\ref{Fig6},
and we denote by $Q^{\nu^+}_T$ and $Q^{\nu^-}_T$ the subsets 
$Q_T\cap\{x\cdot \nu >0\}$ and $Q_T\cap\{x\cdot \nu<0\}$ respectively.
Furthermore we set
\begin{align}
\label{bordo-sup} & \delta Q_T^{\nu^+} := \{ x\in Q^{\nu^+}_T \cap \delta Q_r \colon  x\cdot\nu > 2\},\\ 
\label{bordo-inf} & \delta Q_T^{\nu^-} := \{ x\in Q^{\nu^-}_T \cap \delta Q_r \colon x\cdot\nu<- 2 \},
\end{align}
The value $a_{i,j}(T,\nu)$ is the minimal
length of $Q_T\cap\partial E$, among $E\in\A$ obtained from a family coinciding with
$\{E_h^{i,j,\nu}\}$ on sets intersecting $\delta Q_T$. In particular, if $i,j\neq 0$, then
the sets $\delta Q_r^{\nu^+}$ and  $\delta Q_r^{\nu^-}$,
represented in Fig.~\ref{Fig6} by the shaded area, are covered by elements of $\{E_h\}$
in $\ZZ_i$ and $\ZZ_j$, respectively.



\smallskip
\begin{definition}[energy density]
The {\em surface energy density}  $f:\{0,\ldots,8\}\times \{0,\ldots,8\}\times S^1\to (0,+\infty)$
is defined by setting $f(i,i,\nu)=0$ and, if $i\neq j$, 
\be\label{energy-density}
f (i,j,\nu):= \liminf_{T\to+\infty} 
\big(|\nu_1| \vee |\nu_2| \big)
\frac{a_{i,j}(T,\nu)}{T},
\ee
where the $a_{i,j}$ are defined by minimization on $Q_T$ in {\rm(\ref{aij})}.
\end{definition}

The normalization factor $|\nu_1| \vee |\nu_2|$ takes into account the length of $Q_1\cap\{x: x\cdot\nu=0\}$.

%

\begin{remark}\rm
1. ({\em symmetry})
Note that the symmetric definition of $\{E_h^{i,j,\nu}\}$ gives that $$a_{i,j}(T,\nu)=a_{j,i}(T,-\nu)$$ for all $i,j\in\{0,\ldots,8\}$ and $\nu\in S^1$, so that
$$
f(i,j,\nu)=f(j,i,-\nu),
$$
which is a necessary condition for a good definition of a surface energy.

2. ({\em continuity}) For all $i,j$ the function $f(i,j,\nu)$ is continuous on $S^1$. In order to check this, given $\nu$ and $\nu'$, 
for fixed $T$ let $\{E^{\nu,T}_h\}$ be a minimizer for $a_{i,j}(T,\nu)$,
and let
$$
E^{\nu,T}=\bigcup\{E^{\nu,T}_h: |E^{\nu,T}_h\cap Q_T|\neq0\}
$$
We define $\{E^{\nu',T}_h\}$ as 
$$
\{E^{\nu,T}_h: |E^{\nu,T}_h\cap Q_T|\neq0\}\cup\{E_h^{i,j,\nu'}: |E_h^{i,j,\nu'}\cap E^{\nu,T}|=0\}
$$
and use it to test $a_{i,j}(T+8,\nu')$, for which it is an admissible test family. We then get
$$
a_{i,j}(T+8,\nu')\le a_{i,j}(T,\nu)+ C + CT|\nu-\nu'|,
$$
the constant $C$ estimating the contribution on $Q_{T+4}\setminus Q_T$, and the last term
due to the mismatch of the boundary conditions close to $\partial Q_T$. 
From this inequality
we deduce that $f(i,j,\nu')- f(i,j,\nu)\le C|\nu-\nu'|$ and, arguing symmetrically, that
$$
|f(i,j,\nu')- f(i,j,\nu)|\le C|\nu-\nu'|.
$$
\end{remark}

\begin{remark}\rm
1. The liminf in (\ref{energy-density}) is actually a limit. This can be proved directly by a subadditive argument,
or as a consequence of the property of convergence of minima of $\Gamma$-convergence (see Remark \ref{co-min}).

2. An alternate formula can be obtained by defining $Q^\nu$ as the unit square centered in $0$ and with one side orthogonal to $\nu$. We then have
\be\label{energy-density-2}
f (i,j,\nu):= \liminf_{T\to+\infty} 
{1\over T} \min\Big\{
\F^1(\{E_k\},TQ^\nu)  \colon E_k\in\{E_h^{i,j,\nu}\} \hbox{ if } E_k\cap \delta TQ^\nu\neq\emptyset
\Big\},
\ee
where again $\delta TQ^\nu=\{ x\in TQ^\nu \colon \dist(x; \partial Q_T)< 4 \}$.
Note that in this case we do not need to normalize by $|\nu_1| \vee |\nu_2|$ since the length of $Q^\nu\cap\{x: x\cdot\nu=0\}$ is $1$.

This formula can be again obtained as a consequence of the $\Gamma$-convergence
Theorem. Conversely, a proof of Theorem \ref{thm:gammalim} 
using this formula can be obtained following the same 
line as with the first formula, but is a little formally more complex due to the fact that the sides of $Q^\nu$ are not oriented in the coordinate directions. The changes in the proof can be found in the paper by Braides and Cicalese \cite{BC-last}. Note that usually extensions to dimensions higher than two are easier with this second formula.

\end{remark}

\subsection{$\Gamma$-limit}
Let  $\F_\Omega^\e$ be the functional defined for each $\{E_j\}\in  \E^\e (\Omega)$ as
\begin{equation}\label{gamma-lime}
\F_\Omega^\e(\{E_j\})=\begin{cases} \F^\e(\{E_j\},\Omega) & \hbox{ if $\{E_j\}\in  \E^\e (\Omega)$ }
\cr+\infty &\hbox{ otherwise.}\end{cases}
\end{equation}

We introduce the functional that assigns to every partition $A=$ $\{A_0,\dots,A_8  \}\in\P(\Om)$ 
the real number 
\be\label{gamma-lim}
\F_\Omega(A):= \sum_{i=0}^7\sum_{j=i+1
}^8\int_{\Omega\cap\partial A_i\cap\partial A_j} f (i,j,\nu^i)d\H^{1}
+
 \sum_{i=1}^8\int_{\partial A_i\cap\partial \Om} f (i,0,\nu^i)d\H^{1},
\ee
where $\nu^i$ is the {\em inner normal} of the set $E_i$ and $f$ is the interface energy defined 
above. We use the notation $\partial A$ to denote the {\em reduced boundary} of a set
of finite perimeter $A$. Since we consider topological boundaries which coincide $\H^1$ almost everywhere with the corresponding reduced boundaries this notation will not cause confusion.

We use the same notation in \eqref{gamma-lim} also when $\Omega$ is not bounded.
In particular we can consider $\Omega=\R^2$, in which case the last surface
integral is not present. In that case we use the notation $\F(A)$ in the place of $\F_{\R^2}(A)$.

We then have the following result.

\begin{theorem}\label{thm:gammalim}
Let $\Om\subset \R^2$ be an open bounded set with Lipschitz-continuous boundary. 
The sequence of functionals $\{\F^{\e}_\Omega\}$ defined in \eqref{gamma-lime}
$\Gamma$-converges, as $\e\to 0^{+}$, to the functional $\F_\Omega$ defined by \eqref{gamma-lim}, with respect to 
the convergence of Definition {\rm\ref{def-convergenza}}. 
\end{theorem}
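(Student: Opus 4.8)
The plan is to establish the two standard inequalities of $\Gamma$-convergence separately: the lower bound $\F_\Omega(A)\le \liminf_\e \F^\e_\Omega(\{E^\e_j\})$ for every sequence $\{E^\e_j\}\to A$, and the existence, for every $A\in\P(\Om)$, of a recovery sequence $\{E^\e_j\}\to A$ with $\limsup_\e \F^\e_\Omega(\{E^\e_j\})\le \F_\Omega(A)$. Throughout I may assume $\sup_\e\F^\e_\Omega(\{E^\e_j\})<+\infty$, so that by Theorem \ref{te-co} the limit partition $A$ exists and the auxiliary sets $A^\e_i$ produced there are available.

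For the lower bound I would localise and use the blow-up method. Writing $E^\e=\bigcup_j E^\e_j$, consider the measures $\mu_\e:=\H^1\lfloor(\overline\Om\cap\partial E^\e)$; by \eqref{equibounded} these have equibounded mass, so up to a subsequence $\mu_\e\weakst\mu$ for a finite measure $\mu$. It then suffices to prove, at $\H^1$-almost every point $x_0$ of each interface $\partial A_i\cap\partial A_j$, the Radon--Nikodym lower bound $\frac{\d\mu}{\d\H^1\lfloor(\partial A_i\cap\partial A_j)}(x_0)\ge f(i,j,\nu^i(x_0))$, together with the analogous statement for $x_0\in\partial A_i\cap\partial\Om$ with density $f(i,0,\nu^i)$. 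Fix such an $x_0$ in the reduced boundary and blow up on cubes $Q_\rho(x_0)$: the rescaled partition converges to the half-plane partition with phases $i$ and $j$ and normal $\nu^i(x_0)$. Rescaling the configuration in $Q_\rho(x_0)$ by $1/\e$ turns it into an admissible family on $Q_{\rho/\e}$, and here the rigidity of Lemma \ref{quadrato-interno} is essential: away from the interface the configuration is locked into a single modulated phase, which forces it, after a controlled modification near $\partial Q_{\rho/\e}$ paid for at lower order, to satisfy the boundary datum $\{E^{i,j,\nu}_h\}$ of the minimum problem \eqref{aij}. Hence $\mu(Q_\rho(x_0))\ge \e\,a_{i,j}(\rho/\e,\nu^i(x_0))$ for a.e.\ $\rho$, and dividing by $\rho$ and sending first $\e\to0$ and then $\rho\to0$ yields the density bound through \eqref{energy-density}.

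For the upper bound I would first treat polygonal partitions, i.e.\ those for which each $\partial A_i\cap\partial A_j$ is a finite union of segments and the $A_i$ meet only along such segments and at finitely many vertices. On each interfacial segment with normal $\nu$ separating phases $i$ and $j$ I would paste together, periodically along the segment, rescaled copies of a near-optimal competitor for $a_{i,j}(T,\nu)$ with $T$ of order $\e^{-1}$ times the segment length; in the interior of each $A_i$ the sequence is simply the ground state $\ZZ^\e_i$, so the transition layers have width $O(\e)$ and the bulk phases match. The energy so produced converges to $\sum f(i,j,\nu)\,\H^1(\partial A_i\cap\partial A_j)$, and the interfaces at $\partial\Om$ contribute the boundary integral in \eqref{gamma-lim} by the same construction with $j=0$. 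The contributions of the finitely many vertices, and of the mismatch with the Lipschitz boundary of $\Om$ approximated by polygons, are confined to regions of diameter $O(\e)$ and hence are negligible. Finally I would remove the polygonal restriction by approximating an arbitrary $A\in\P(\Om)$ in $L^1$ by polygonal partitions with convergence of $\F_\Omega$, using the density of polyhedral partitions together with the continuity of $f(i,j,\cdot)$ established above, and conclude by lower semicontinuity of the $\Gamma$-$\limsup$ functional and a diagonal argument.

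The main obstacle I expect is in the lower bound, in the step that converts the approximate two-phase structure of the blow-up into an exact boundary datum for \eqref{aij}: because optimal microscopic configurations may recruit modulated phases other than $i$ and $j$, one cannot argue phase by phase, and one must instead exploit Lemma \ref{quadrato-interno} carefully to show that, outside a negligible neighbourhood of the interface, the admissible family coincides with $\{E^{i,j,\nu}_h\}$ up to a modification whose extra boundary length is $o(\rho/\e)\cdot\e$. A secondary difficulty, on the upper-bound side, is the geometric compatibility of the rigid molecules across junctions and against $\partial\Om$: the tilings coming from adjacent interfaces must be fitted together without creating overlaps or spurious boundary beyond the $O(\e)$-neighbourhoods of the vertices.
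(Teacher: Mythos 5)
Your proposal follows essentially the same route as the paper: the lower bound via the Fonseca--M\"uller blow-up, reduced to the cell formula \eqref{aij} by a surgery near $\partial Q_{\rho/\e}$ that enforces the boundary datum at negligible cost (the paper implements this by averaging over thin square frames to select one where the mismatch with the two reference phases has small measure), and the upper bound by periodic pasting of near-optimal competitors for $a_{i,j}(T,\nu)$ along the interfaces of a polyhedral partition, followed by a density argument using the continuity of $f(i,j,\cdot)$. You also correctly single out the two genuine difficulties (the boundary-datum conversion in the lower bound, and the compatibility of the rigid tilings at junctions), so the plan matches the paper's proof in both structure and substance.
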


\begin{remark}[$BV$-ellipticity]\rm
As a consequence of the lower semicontinuity of $\F$ we obtain that $f$ is {\em$BV$-elliptic} \cite{AB}.
In particular the extension by one-homogeneity of $f(i,j,\cdot)$ is convex for all $i,j$ and we have
the subadditivity property $f(i,j,\nu)\le f(i,k,\nu)+f(k,j,\nu)$.
\end{remark}

\begin{proof}

\noindent {\em Lower bound.}\\
We consider a partition $A=(A_0,\ldots, A_8)\in\P(\Omega)$ and a family $\{E^\e_h\}$ converging to $A$.
We can suppose that $\liminf_{\e\to0}\F_\Omega^\e(\{E^\e_j\})<+\infty$. We choose a subsequence 
$(\e_k)$ such that $$\lim_k\F_\Omega^{\e_k}(\{E^{\e_k}_j\})=\liminf_{\e\to0}\F_\Omega^\e(\{E^\e_j\})$$
and such that the measures on $\Omega$ defined by $\mu_{\e_k}(B)=\F_\Omega^{\e_k}(\{E^{\e_k}_j\},B)$ weakly$^*$ converge to some measure $\mu$. In order not to overburden the notation we denote $\e_k$ simply by $\e$.

We use the blow-up method of Fonseca and M\"uller \cite{FM}, which consists in giving a lower bound of the
density of the measure $\mu$ with respect to the target measure $\H^1$ restricted to $\bigcup_i\partial A_i$.
We refer to \cite{BMS} for technical details regarding the adaptation of this method to homogenization problems.

In the present case the blow-up is performed at $\H^1$-almost
every point $x_0$ in $\bigcup_i\partial A_i$. Note that this comprises also the points in $\partial \Omega$ 
where the inner trace of the partition at that point is not the set $A_0$. 
By a translation and slight adjustment argument 
(due to the fact that in general $x_0\not\in\e\Z^2$) we can simplify our notation by supposing that $x_0=0$.
It then suffices to show that 
\be\label{derivative}
\lim_{\rho\to 0}\lim_{\e\to 0} \frac{\F^\e( \{E_h^\e\},Q_\rho)}
{\H^1(\partial A_i\cap\partial A_j\cap Q_\rho)}
\geq
f (i,j,\nu),
\ee
%
supposing that $0\in\partial A_i\cap\partial A_j$ and setting $\nu=\nu^i(0)$.

Let $\e=\e(\rho)$ be such that $T=\rho/\e\to+\infty$ and 
$$
\lim_{\rho\to 0}\lim_{\e\to 0} \frac{\F^\e( \{E_h^\e\},Q_\rho)}
{\H^1(\partial A_i\cap\partial A_j\cap Q_\rho)}=
\lim_{\rho\to 0} \frac{\F^{\e(\rho)}( \{E_h^{\e(\rho)}\},Q_\rho)}
{\H^1(\partial A_i\cap\partial A_j\cap Q_\rho)}.
$$

Note that (by definition of reduced boundary) ${1\over\rho}A \cap Q_1$ 
tends to $A^{i,j,\nu}\in \P(Q_1)$, 
where $A^{i,j,\nu}_i= Q_1^{\nu^+}$ and  $A^{i,j,\nu}_j = Q_1^{\nu^-}$ (and therefore 
$A^{i,j,\nu}_h =\emptyset $ for each $h\neq i,j$  and the interface $\partial A^{i,j,\nu}_i\cap\partial A^{i,j,\nu}_j$ is the segment $Q_T\cap \{x\cdot\nu=0\})$.

In order to prove \eqref{derivative} it is sufficient to show that 
\be\label{lower-bound}
\liminf_{T\to \infty}\frac{|\nu_1| \vee |\nu_2|}{T}\F^1(\{E^T_h\},Q_T) \geq
f (i,j,\nu),
\ee
with
$$
\{E^T_h\}=\Bigl\{{1\over\e}E^\e_h\Bigr\}.
$$

We then define $\omega^\pm_T$ as follows. We set
$$
\omega^+_T=\bigcup_h\{ E^T_h: E^T_h\in\ZZ_i,\ E^T_h\subset Q_T^{\nu+}\} 
$$
if $i>0$ and 
$$
\omega^+_T=Q_T^{\nu+}\setminus \bigcup_h\{ E^T_h\} 
$$
if $i=0$, and similarly
$$
\omega^-_T=\bigcup_h\{ E^T_h: E^T_h\in\ZZ_j,\ E^T_h\subset Q_T^{\nu-}\} 
$$
if $j>0$ and 
$$
\omega^-_T=Q_T^{\nu-}\setminus \bigcup_h\{ E^T_h\} 
$$
if $j=0$.
Then 
\begin{align}
\label{small}  |Q_T \setminus (\omega^+_T \cup \omega^-_T) | = o(T^2).
\end{align}


%

We now show that, up to a small error, $\{E^T_h\}$ can be modified in order to fulfill  the boundary conditions defined in \eqref{aij} for each $T$. From this \eqref{lower-bound} follows by the definition of $a_{i,j}$.  

Let $\sigma \ll 1$ and $k_T:= [\sigma T/12] $. We introduce a partition of the frame 
$$
\Bigl\{(x_1,x_2): {T\over 2}(1- \sigma)<  |x_i|  < {T\over 2} , i=1,2\Bigr\}= Q_T \setminus \overline{Q_{(1-\sigma)T}}
$$
into $k_T$ subframes  $C^T_n$ of thickness $\sigma T / 2 k_T$:
$$
C^T_n:=\Big\{ 
(x_1,x_2):   {T\over 2}\Bigl(1- \sigma +  \frac{ (n - 1) \sigma}{ k_T}\Bigr)<   |x_i|  <    {T\over 2}\Bigl(1- \sigma +  \frac{n\sigma }{ k_T}\Bigr), i=1,2
\Big\}$$ 
for $n=1,\dots, k_T$. Note that $\sigma T / 2 k_T\ge 6$.

From \eqref{small} it follows that  there exists $n_T \in\{ 1, \dots, k_T\}$ such that 
\be\label{cornice-i}
 |C_{n_T}^T \setminus (\omega^+_T \cup \omega^-_T) |   =\frac{o(T^2)}{k_T} = \frac{o(T)}{\sigma}.
\ee
%
We define
$$
R_T={T}\Bigl(1- \sigma +  \frac{n_T\sigma }{ k_T}\Bigr)
$$
so that 
$$
C_{n_T}^T= Q_{R_T}\setminus {\overline Q_{R_T-\frac{T\sigma }{ k_T}}}.
$$

We now construct a family $\{\tilde E^T_h\}$ satisfying the desired boundary conditions by 
taking the elements of the family $\{E_h^{i,j,\nu}\}$ defined in \eqref{Eijnu} which are not contained 
in $Q_{R_T}$ union those in $\{E^T_h\}$ which do not intersect any of the former. 
More precisely, we define
$$
E^{i,j,\nu}_{R_T}=\bigcup\Bigl\{ E_h^{i,j,\nu}: E_h^{i,j,\nu}\not\subset Q_{R_T}\Bigr\}
$$
and 
$$
\{\tilde E^T_h\}=\{ E^T_h: |E^{i,j,\nu}_{R_T}\cap E^T_h|=0\}\cup
\Bigl\{ E_h^{i,j,\nu}: E_h^{i,j,\nu}\not\subset Q_{R_T}\Bigr\}.
$$
Let $\tilde E^T=\bigcup_h \tilde E^T_h$.
Note that, up to $\H^1$-negligible sets
\begin{equation}\label{inclu}
(\partial \tilde E^T\setminus  \partial E^T)\cap \overline Q_{R_T}\subset
 (C_{n_T}^T \setminus (\omega^+_T \cup \omega^-_T)).
 \end{equation}
This inclusion is proved noting that points in the boundary of $\tilde E^T$ which are not in the boundary of $E^T$
can be subdivided into two sets: points that are in $E^{i,j,\nu}_{R_T}$ and those that are not.
The first ones must belong to some $E^{i,j,\nu}_h$  with 
$\H^1\big(\partial E^{i,j,\nu}_h\cap (C_{n_T}^T \setminus (\omega^+_T \cup \omega^-_T))\big)\neq0$, 
the second ones must be interior to $E^T$ but on the boundary of some $E^T_h$ with 
$|E^{i,j,\nu}_{R_T}\cap E^T_h|\neq 0$, so that in particular they also belong to  
$C_{n_T}^T \setminus (\omega^+_T \cup \omega^-_T)$. From \eqref{inclu}
and the fact that $C_{n_T}^T \setminus (\omega^+_T \cup \omega^-_T)$ is composed
of unit squares, using \eqref{cornice-i}, we have
\begin{equation}\label{inclu-dise}
\H^1\Bigl((\partial \tilde E^T\setminus  \partial E^T)\cap \overline Q_{R_T}\Bigr)= {o(T)\over\sigma}.
 \end{equation}

We can estimate
\begin{eqnarray}
\H^1(Q_T\cap \partial \tilde E^T)&=&
\H^1\Bigl(\overline Q_{R_T-\frac{T\sigma }{k_T}}\cap \partial \tilde E^T\Bigr)\\ \nonumber
&&+ \H^1\Bigl(\Bigl(Q_{R_T}\setminus \overline Q_{R_T-\frac{T\sigma }{k_T}}\Bigr)\cap \partial \tilde E^T\Bigr)
+\H^1((Q_T\setminus Q_{R_T})\cap \partial \tilde E^T)\\
&=&\nonumber
\H^1\Bigl(\overline Q_{R_T-\frac{T\sigma }{k_T}}\cap \partial  E^T\Bigr)
+
\H^1\Bigl(C^T_{n_T}\cap \partial \tilde E^T\cap\partial E^T\Bigr)\\ \nonumber
&&
+ \H^1\Bigl(C^T_{n_T}\cap \partial \tilde E^T\setminus\partial E^T\Bigr)
+\H^1((Q_T\setminus Q_{R_T})\cap \partial \tilde E^T)\\
&\le&\nonumber
\H^1\Bigl(\overline Q_{R_T}\cap\partial E^T\Bigr)\\ \nonumber
&&
+ \H^1\Bigl(C^T_{n_T}\cap \partial \tilde E^T\setminus\partial E^T\Bigr)
+\H^1((Q_T\setminus Q_{R_T})\cap \partial \tilde E^T)
\\
&\le&\nonumber
\H^1\Bigl(Q_{T}\cap\partial E^T\Bigr) +{o(T)\over\sigma}+ 4 T\sigma.
\end{eqnarray}
In terms of the functionals $\F^1$ this reads
$$
\F^1(\{\tilde E_h^T\}, Q_T)  \leq 
\F^1(\{E_h^T\}, Q_T)  +  \frac{o(T)}{\sigma} + 4 \sigma T,
$$
which in turn yields
\bes
\liminf_{T\to \infty}\frac{|\nu_1| \vee |\nu_2|}{T}\F^1(\{ E_h^T\}, Q_T) \geq
\liminf_{T\to \infty}\frac{|\nu_1| \vee |\nu_2|}{T} \F^1(\{\tilde E_h^T\}, Q_T)
-4\sigma.
\ees
 Estimate \eqref{lower-bound} now follows from the arbitrariness of $\sigma$ and the definition of $\tilde E^T$.

Note that if the blow-up is performed at a point in $\partial\Omega$, then $\nu$ is the inner normal to $\Omega$,
$i\ge 1$ and $j=0$, which gives the boundary term in $\F_\Omega$.

\medskip
\noindent
{\em Upper bound.}\\
We need to show that for each $A\in\P(\Om)$ there exists a sequence $\{E^\e_j\}\in \E^\e(\Omega)$ converging to 
$A$ and such that
$\limsup_\e \F^\e(\{E^\e_j\},\Omega) \leq  \F_\Omega(A)$.
We can choose polyhedral sets $\Omega^h\subset\subset\Omega$ and
polyhedral partitions $A^h$ such that $|\Omega\setminus\Omega^h|\to 0$, $|A^h_i\triangle A_i|\to0$,
$\H^1(\partial\Omega^h)\to \H^1(\partial\Omega)$ and $\H^1(\partial A^h_i\cap\Omega^h)\to \H^1(\partial A_i\cap\Omega)$, $A^h_0\supset {\mathbb R}^2\setminus\Omega$, so that $\F(A^h)\to \F_\Omega (A)$
by the continuity of $f$.
The existence of such $\Omega^h$ follows from the regularity of $\Omega$, while 
the construction of partitions $A^h$ can be derived from \cite{CGM}.
By an usual approximation argument (\cite{GCB} Section 1.7)
it thus suffices to construct recovery sequences for $\F(A)$ in the case when  
$\bigcup_{i=1}^8A_i\subset\subset\Omega$ and each element of the partition is a polyhedral set, provided that the approximating $\{E^\e_j\}$ belongs to $\E^\e(\Omega)$. In other words, it suffices to construct recovery sequences for $\F(A)$
in the case when each $A_i$ is a bounded polyhedral set for $i\ge 1$, provided that the approximating $E^\e_j$ belong to a small neighbourhood of $\bigcup_{i=1}^8 A_i$.

Since we will reason locally, we exhibit our construction when the target partition is composed of the
two half-planes $A_l=\{x: \nu\cdot x>0\}$ and $A_k=\{x: \nu\cdot x<0\}$, and $\nu$ is a rational direction;
i.e., there exists $L\in\R$ such that $L\nu\in\Z^2$. We fix $\eta>0$ and $T=T_\eta$ such that  
\begin{equation}\label{defiti} 
\frac{|\nu_1| \vee |\nu_2|}{T} a_{l,k}(T, \nu)
\le f (l,k,\nu)+\eta
\end{equation}
Up to choosing a slightly larger $T$ (at most larger $4L$ than the previous one)
we can suppose that 
$$
{T\over {|\nu_1|\vee|\nu_2|}}\nu\in 4\Z^2. 
$$

Indeed, this amounts to an additional error proportional 
to $4L/(T+4L)$ in \eqref{defiti}, which we can include in $\eta$.

\begin{figure}[ht]
\centering
\includegraphics[width=.9\textwidth]{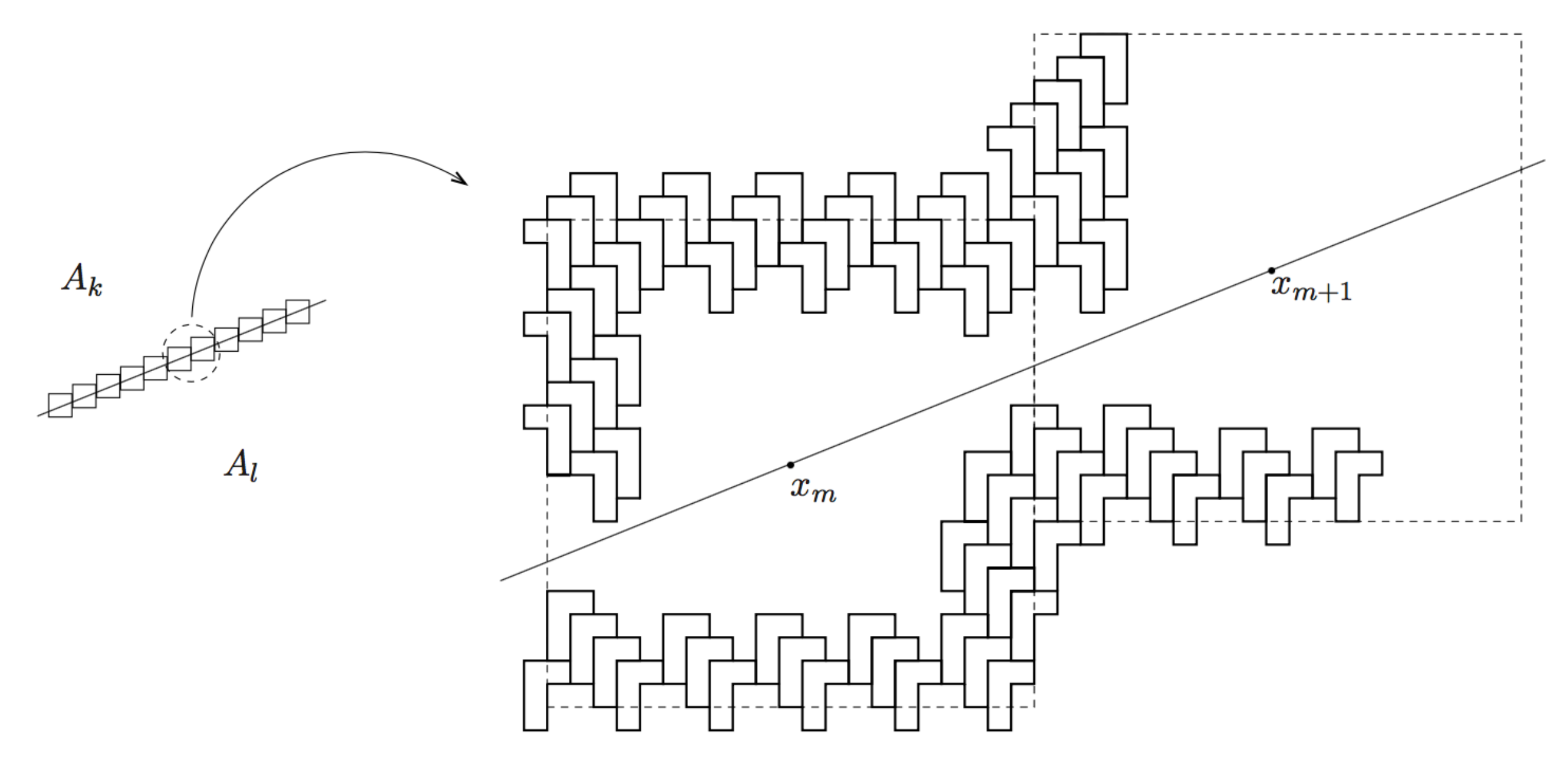}
\caption{Construction of a recovery sequence at an interface}
\label{fig:upper-bound}
\end{figure}
Let $\{E^T_j\}$ be a minimal family for $a_{l,k}(T, \nu)$.
We set $\nu^\perp=(-\nu_2,\nu_1)$.
We construct a sequence of molecules by covering the interfacial line $\{x: \nu\cdot x=0\}$
with the disjoint squares $E^T_j+mT{({|\nu_1|\vee|\nu_2|})^{-1}}\nu^\perp $ ($m\in\Z$), up to a discrete set of points,
and consider the optimal family inside each such square (see Fig.~\ref{fig:upper-bound}). Since the centres of such
cubes differ by a multiple of $4$ in each component, we can choose such optimal 
families as the translation of a single family, and match on the boundary of
each cube with elements of $\{E^{l,k,\nu}_j\}$, which allows to extend them outside
the union of the covering cubes. Note that this extension has zero energy, except
for $\nu={1\over\sqrt2}(\pm1,1)$, for which we may have a small contribution due to
a fixed number of molecules close to the vertices of the cubes on $\{x: \nu\cdot x=0\}$;
again this error will be taken care of by $\eta$.

We define $\{E^{l,k,T,\nu}_j\}$ as the union of all families 
\begin{equation}\label{fam-1}
\Bigl\{E^T_j+x_m: E^T_j\cap Q_T\neq \emptyset\Bigr\},\qquad x_m=m{T\over {|\nu_1|\vee|\nu_2|}}\nu^\perp 
\end{equation}
for $m\in\Z$, and the family
\begin{equation}\label{fam-2}
\Biggl\{E\in \{E^{l,k,\nu}_j\}: E\cap\bigcup_{m\in\Z} \Bigl(Q_T+m{T\over {|\nu_1|\vee|\nu_2|}}\nu^\perp\Bigr)=\emptyset\Biggr\}.
\end{equation}

Let now $\{E^\e_j\}$ be defined as $\{\e E_j\}$ with $\{E_j\}
=\{E^{l,k,T,\nu}_j\}$ the family just described.
Note that for all bounded open set $B$ with Lipschitz boundary 
such that $\H^1(\{ \nu\cdot x=0\}\cap \partial B)=0$
we have
$$
\limsup_{\e\to0} \F(\{E^\e_j\},B)\le \Bigl(f(l,k,\nu)+\eta\Bigr) \H^1(\{ \nu\cdot x=0\}\cap B).
$$

We now fix $A_i$ bounded polyhedral sets $i=1,\ldots, 8$, and 
repeat the construction described above close to each interface. 
To that end, we denote 
$$
\bigcup_{i=1}^8\partial A_i=\{p_m\},
$$ 
where $p_m$ are a finite number of segments with endpoints $x^+_m$ and $x^-_m$. 
Let $l(p_m)$ and $k(p_m)$ be the indices such that
$$
p_m\subset \partial A_{l(p_m)}\cap\partial A_{k(p_m)}
$$
and let $\nu(p_m)$ be the inner normal to $A_{l(p_m)}$ at $p_m$.
In our approximation argument it is not restrictive to suppose that 
$\nu(p_m)$ is a rational direction.
We fix $\eta$ and $T_m$ such that
\begin{equation}\label{defiti-m} 
\frac{|\nu_1(p_m)| \vee |\nu_2(p_m)|}{T_m} a_{l,k}(T_m, \nu(p_m))
\le f (l(p_m),k(p_m),\nu(p_m))+\eta
\end{equation}
and 
$$
{T_m\over {|\nu_1(p_m)|\vee|\nu_2(p_m)|}}\nu(p_m)\in 4\Z^2.
$$

We choose $M$ large enough so that the distance between all points
of 
$$
\bigcup\{p_m: x \hbox{ is an endpoint of }p_m\}\cap \partial Q_{\e M}(x)
$$
is larger than $2\e(4+\sup_m  T_m)$. 

Let $x^m_\e\in 4\e\Z^2$ be such that $p_m$ is contained in a tubular neighbourhood of the 
line through $x^m_\e$ and orthogonal to $\nu(p_m)$; i.e., such that
$$
p_m\subset \{x: (x-x^m_\e)\cdot\nu(p_m)=0\}+ Q_{4\e}.
$$
We denote 
\begin{eqnarray*}
C^\e_m&=&\bigcup\Bigl\{ x^m_\e+\e h{T_m\over {|\nu_1(p_m)|\vee|\nu_2(p_m)|}}\nu(p_m)^\perp+ Q_{\e T_m}: \\
&&\Bigl(x^m_\e+\e h{T_m\over {|\nu_1|\vee|\nu_2|}}\nu(p_m)^\perp+ Q_{\e T_m}\Bigr)
\cap p_m\neq\emptyset, h\in\Z\Bigr\}.
\end{eqnarray*}
Let $\{E^{\e,m}_j\}$ be the elements of the family $\{\e E^{l,k,T,\nu}_j+x^m_\e\}$
intersecting $C^\e_m\setminus \big(Q_{\e M}(x^+_m)\cup Q_{\e M}(x^-_m)\big)$,
where $\{E^{l,k,T,\nu}_j\}$ is constructed in \eqref{fam-1}--\eqref{fam-2} 
with $l=l(p_m)$, $k=k(p_m)$ and $\nu=\nu(p_m)$.
We then define $\{E^\e_j\}$ as the union of all $\{E^{\e,m}_j\}$
and of all the families
$$
\Bigl\{\e E_j: E_j\in\ZZ_i: E_j\subset A_i\setminus \bigcup_m 
\Bigl(C^\e_m\cup Q_{\e M}(x^+_m)\cup Q_{\e M}(x^-_m)\Bigr)\Bigr\}.
$$

Let $E^\e=\bigcup_j E^\e_j$. Note that the contributions due to the part of $\partial E^\e$ contained
in each set $Q_{\e M}(x^\pm_m)$ is at most of the order $\e M$.
We then have that $\{E^\e_j\}$ converges to $(A_0,\ldots, A_8)$ and 
\begin{eqnarray*}
\limsup_{\e\to 0} \F^\e(\{E^\e_j\})
&\le&\sum_m 
 \Bigl(f(l(p_m),k(p_m),\nu(p_m))+\eta\Bigr) \H^1(p_m)
\le \F(A)+C\eta.
 \end{eqnarray*}
By the arbitrariness of $\eta$ we obtain the upper bound.
\end{proof}

\begin{remark}\rm
The hypothesis that $\Omega$ be bounded can be removed.
In particular we can consider $\Omega=\R^2$, in which case the
term on the boundary of $\Omega$ in \eqref{gamma-lim} disappears.
The theorem can be proved in the same way, but the
notion of convergence must be slightly changed by requiring
that (\ref{seconda}) holds when restricted to bounded sets. 

On the other hand, we can define  $\G_\Omega^\e$ for $\{E_j\}\in  \E^\e $ as
\begin{equation}\label{gamma-lime-z}
\F_\Omega^\e(\{E_j\})=\begin{cases} \F^\e(\{E_j\},\Omega) & \hbox{ if $\{E_j\}\in  \E^\e$ }
\cr+\infty &\hbox{ otherwise;}\end{cases}
\end{equation}
i.e., we do not require the sets $\E_j$ to be contained in $\Omega$. The $\Gamma$-limit
is the same except for the boundary term on $\partial \Omega$, which again disappears.
The liminf inequality clearly holds in the same way, while a recovery sequence can be obtained 
by considering first target partitions that can be extended as sets of finite perimeter in an open 
neighbourhood of $\Omega$, and then argue by density.
\end{remark}

\begin{figure}[ht]
\centering
\includegraphics[width=.60\textwidth]{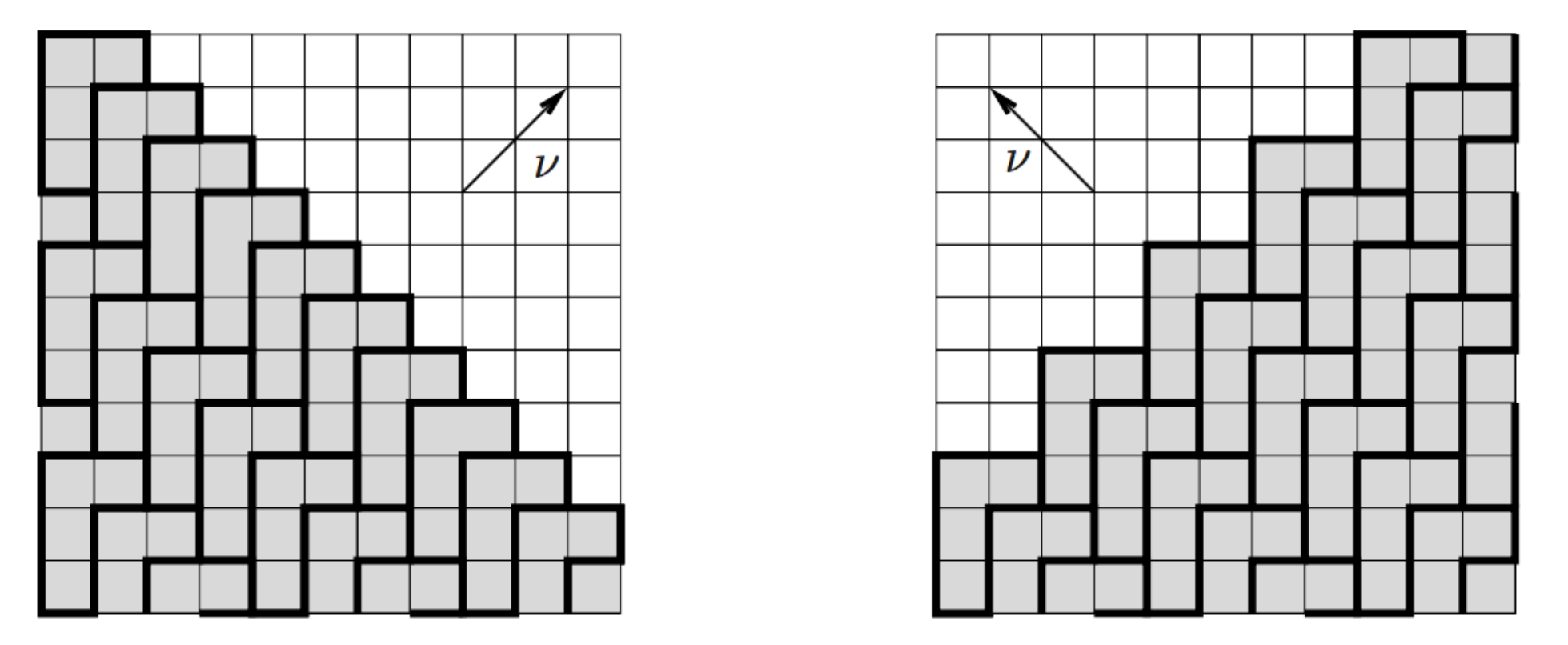}
\caption{Optimal sets when $j=0$ in diagonal directions}
\label{Fig11}
\end{figure}

\subsection{Description of $f$}\label{descrip} \ 

\smallskip\noindent{\em Computation of $f(i,0,\nu)$}.\ \  
Given any $i\in\{1,\ldots, 8\}$, we will explicitly compute $\varphi$,
the one-positively extension of $f(i,0,\cdot)$. Since this function turns out to be symmetric, we also have $\varphi(\nu)=f(0,i,\nu)$.
We treat in detail the case $i\in\{1,\ldots,4\}$. By a symmetry argument with respect to the
vertical direction, we have $f(i,0,\nu)=\varphi(-\nu_1,\nu_2)$ for $i\in\{5,\ldots,8\}$.

We preliminarily note that a lower bound for $\varphi$ is computed by removing the constraint that
the elements of $\{E_h\}$ be chiral molecules; i.e., taking $E_h$ unit squares in the lattice.
The computation for the $\Gamma$-limit without the constraint is simply $\|\nu\|_1=|\nu_1|+|\nu_2|$
(see \cite{ABC}), so that we have $\varphi(\nu)\ge|\nu_1|+|\nu_2|$.

We can check that we have equality for $\nu_1=\pm\nu_2$. Indeed, for such $\nu$ the
 optimal families are simply $\{E^{i,0,\nu}_h\}$, whose corresponding
 sets are those described in Fig.~\ref{Fig11}. 
 
\begin{figure}[ht]
\centering
\includegraphics[width=.6\textwidth]{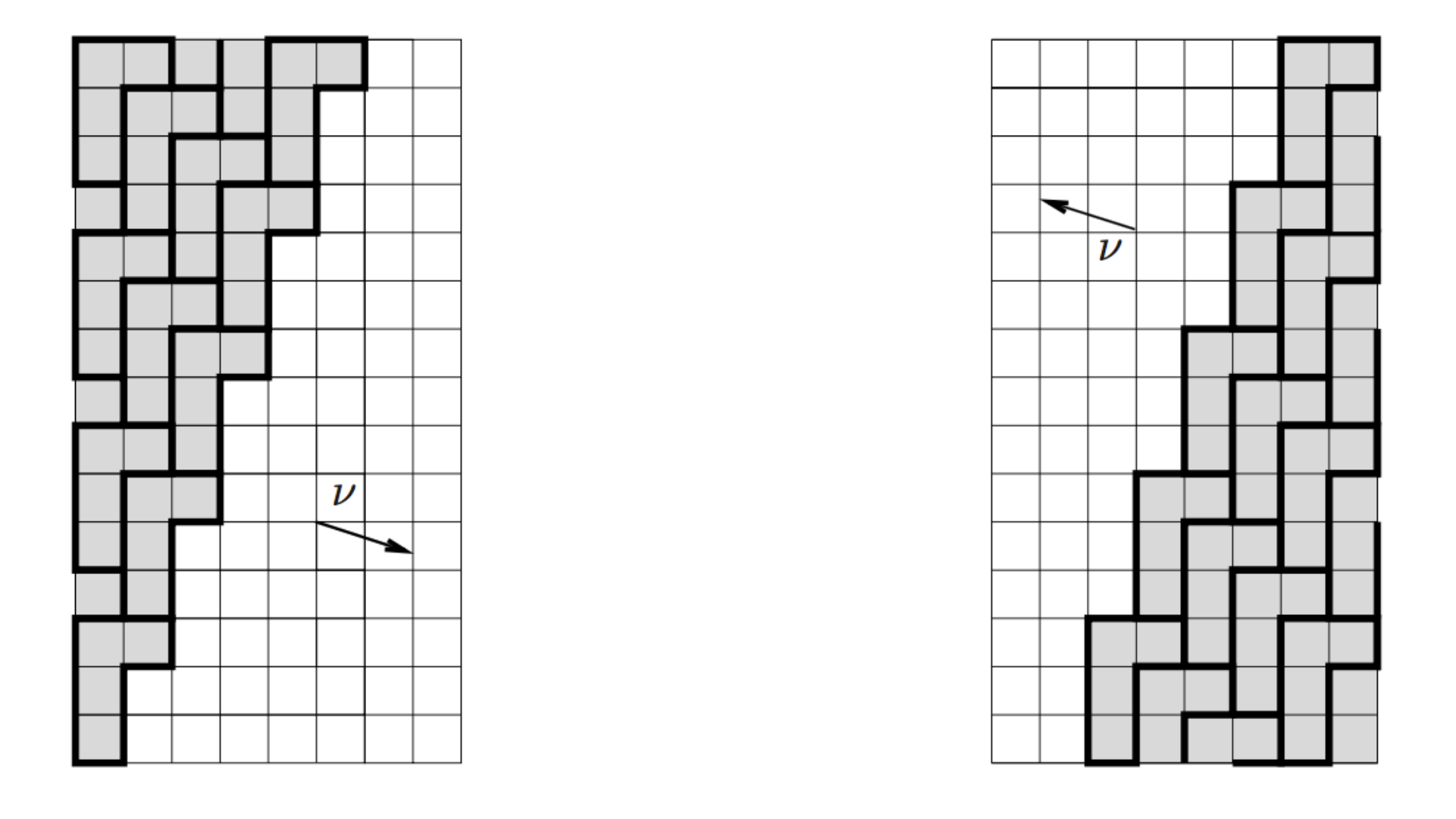}
\caption{Optimal sets when $j=0$ in direction $\pm{1\over\sqrt{10}}(3,-1)$}
\label{Fig21}
\end{figure}
Note that the value in the two direction is the same,
 but the `micro-geometry' of optimal sets is (slightly) different. Two other values in which we have 
 equality are with $\nu_1= 3\nu_2$, with optimal families pictured in Fig.~\ref{Fig21}.

\begin{figure}[ht]
\centering
\includegraphics[width=.45\textwidth]{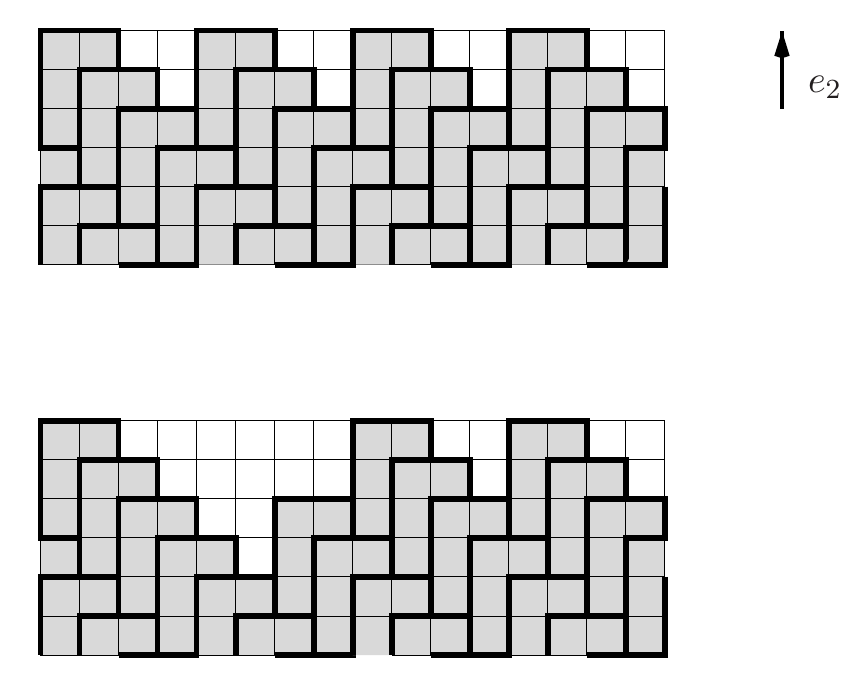}
\caption{Optimal sets in vertical directions}
\label{Fig12}
\end{figure}
We now show that $\varphi$ is a  {\em crystalline energy density} (i.e., the set $\{x: \varphi(x)= 1\}$ is a convex polygon, in this case an hexagon) determined by these six directions; i.e.,
it is linear in the cones determined by the directions. Note first that in the cones bounded by the directions ${1\over\sqrt10}(3,-1)$
and ${1\over\sqrt2}(1,-1)$, and the directions ${1\over\sqrt10}(-3,1)$ and ${1\over\sqrt2}(-1,1)$, $\varphi(\nu)=\|\nu\|_1$ since recovery sequences can be 
obtained by mixing those in Fig.~\ref{Fig11} and Fig.~\ref{Fig21}.

We then note that for $\nu=e_2$ the optimal value is a linear combination of those in $\nu={1\over\sqrt2}(\pm1,1)$,
and is obtained again by $\{E^{i,0,\nu}_h\}$ (see Fig.~\ref{Fig12}). By the convexity of $\varphi$ this implies that
$\varphi$ is linear in the cone with extreme directions ${1\over\sqrt2}(\pm1,1)$. Note that, while for ${1\over\sqrt2}(\pm1,1)$ the geometry of the interface is essentially unique, for $\nu=e_1$ this is not the case, and we may have 
non-periodic  and arbitrary oscillations of the interface (see the lower picture in Fig.~\ref{Fig12})
The symmetric argument holds for $\nu=-e_2$.

\begin{figure}[ht]
\centering
\includegraphics[width=.5\textwidth]{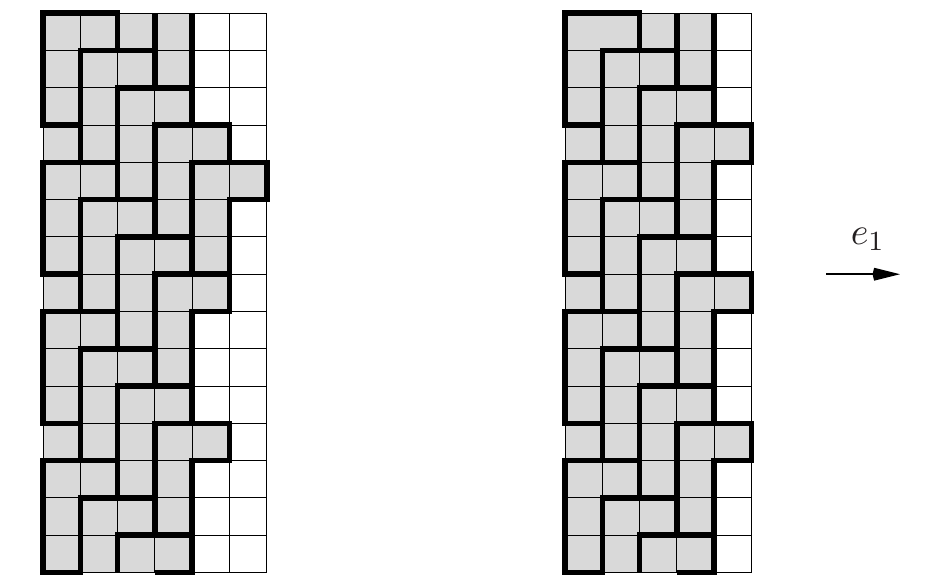}
\caption{Optimal sets in horizontal directions}
\label{Fig13}
\end{figure}
For $\nu= e_1$ the optimal value is a linear combination of those in the directions ${1\over\sqrt2}(1,1)$
and ${1\over\sqrt10}(-3,1)$, which implies that
$\varphi$ is linear in the cone with those extreme directions. Optimal sets are described in Fig.~\ref{Fig13}.
A symmetric argument gives 
the same conclusion for the opposite cone. 

Summarizing, $\varphi$ is a crystalline energy density determined by the values (using the one-homogeneous extension to $\R^2$)
$$
\varphi(1,\pm1)=\varphi(-1,\pm1)=\varphi(\pm (3,-1))= 2,
$$
\begin{figure}[ht]
\centering
\includegraphics[width=.70\textwidth]{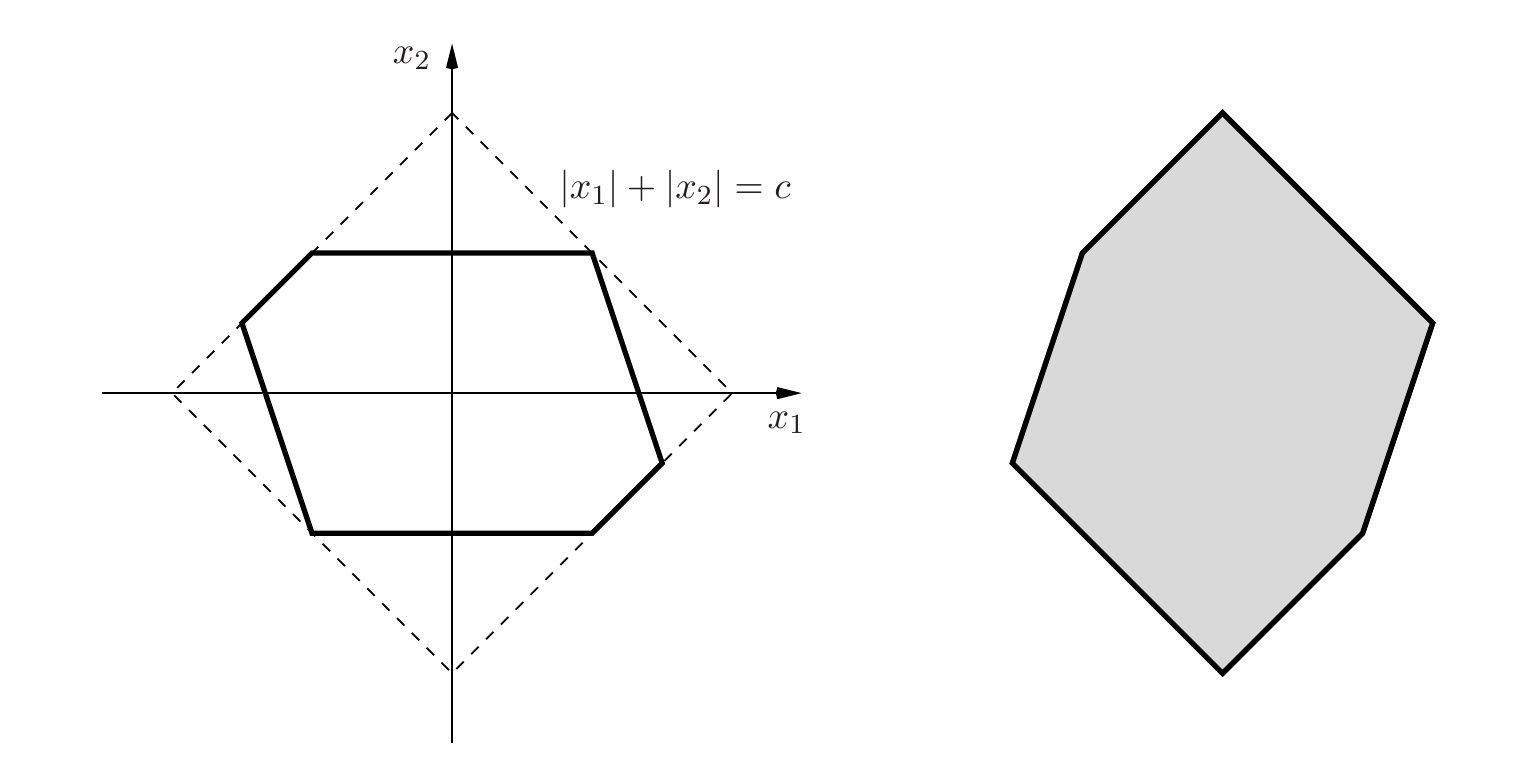}
\caption{A level set for $f(i,0,\cdot)$ and the related Wulff shape ($1\le i\le 4$)}
\label{Fig23}
\end{figure}

\begin{figure}[ht]
\centering
\includegraphics[width=.70\textwidth]{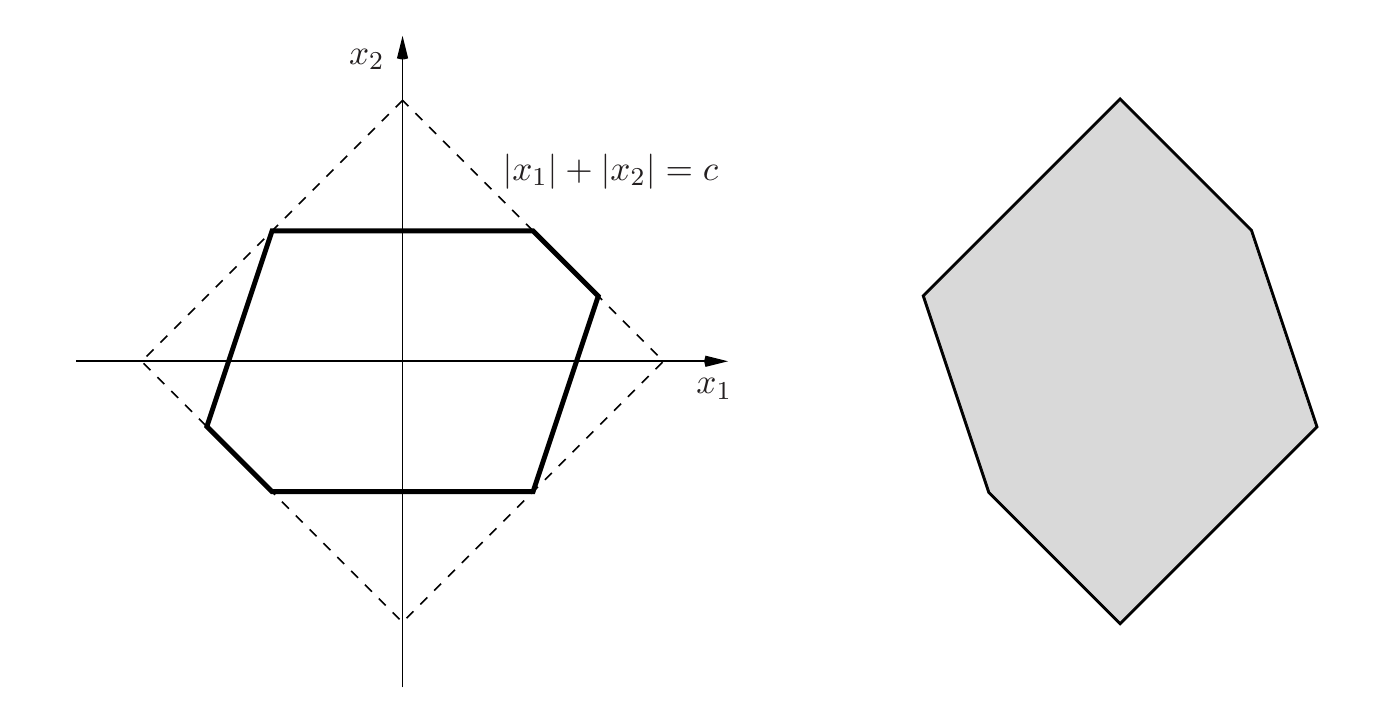}
\caption{A level set for $f(i,0,\cdot)$ and the related Wulff shape ($5\le i\le 8$)}
\label{wulff2}
\end{figure}

A level set $\{x: \varphi(x)=c\}$ is pictured on the left-hand side of  Fig.~\ref{Fig23}.
The {\em Wulff shape} related to $\varphi$ is an irregular hexagon, pictured on the right-hand side of Fig.~\ref{Fig23}. In Fig.~\ref{wulff2} we picture the corresponding sets in the case of 
$f(i,0,\cdot)$ for $5\le i\le 8$.

\smallskip\noindent
{\em Estimates}.\ \ 
From the symmetry of $f(i,0,,\cdot)$ and the subadditivity of $f$ we trivially have, for $i,j>0$ and $ i\neq j$, 
\begin{eqnarray*}
f(i,j,\nu)\le f(i,0,\nu)+ f(0,j,\nu) =\begin{cases}
2\varphi(\nu) &  \hbox{ if } i,j\le 4\cr
2\varphi(-\nu_1,\nu_2) &  \hbox{ if } i,j\ge 4\cr
\varphi(\nu)+\varphi(-\nu_1,\nu_2) &\hbox{ otherwise. }\end{cases}
\end{eqnarray*}
\begin{figure}[ht]
\centering
\includegraphics[width=.25\textwidth]{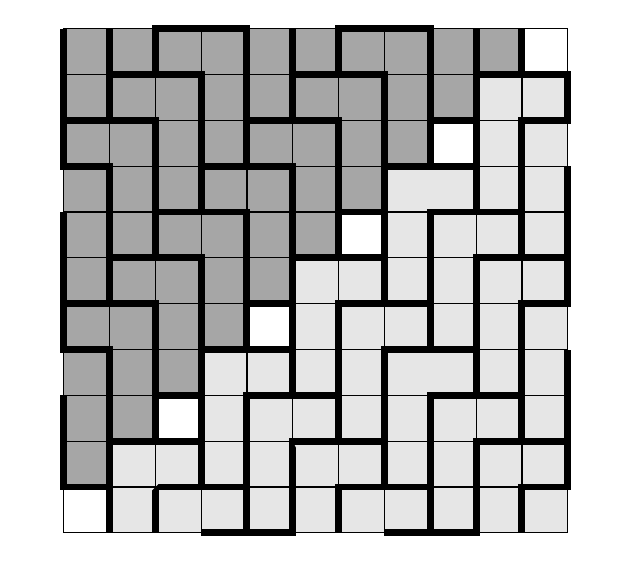}
\caption{Test sets for $i,j\neq0$}
\label{Fig14}
\end{figure}
Note however that this may be an overestimation of $f(i,j,\nu)$: in Fig.~\ref{Fig14} we exhibit test families that show that for $i=1$ and $j=7$ we have 
$$f\Bigl(i,j,{1\over\sqrt2}(1,-1)\Bigr)\le \sqrt 2= \varphi\Bigl({1\over\sqrt2}(1,-1)\Bigr).$$

\subsection{Boundary conditions}\label{bo-da}\rm
We can include in our analysis {\em anchoring} boundary conditions; i.e., we can prescribe the
trace of the elements of the partition on $\partial\Omega$.

We consider a partition $\overline A_0,\ldots,\overline  A_8$ of $\R^2$ into sets of locally finite perimeter
and suppose that for some $\eta>0$
$$
\{x\in\R^2: \hbox{dist}(x,\partial\Omega)<\eta\}\cap\bigcup_i\partial A_i
$$
is composed of a finite number of curves meeting $\partial \Omega$ transversally.
We suppose that the families $\{\overline E^{\e}_j\}$ defined by
$$
\bigcup_{i=1}^8 \{\e E_j: E_j\in\ZZ_i,\  \e E_j\subset\overline  A_i\}
$$
have equibounded energy and converge to the partition $(\overline A_0,\ldots,\overline A_8)$ on bounded sets of $\R^2$. 

The family $\{\overline E^{\e}_j\}$ can be used to define boundary conditions for $\F^\e$,
by setting 
$$
\F^\e_0(\{E^{\e}_j\})=\begin{cases}
\F^\e(\{E^{\e}_j\}{;\Omega} ) &\hbox{ if $\{E^{\e}_j\}\in \E^\e_0$,}\cr
+\infty & \hbox{otherwise}
\end{cases}
$$
where $\E^\e_0$ is the collection of families in $\{E^\e_j\}\in\E^\e$ such that 
$E^\e_j\in \{\overline E^{\e}_j\}$ if $E^\e_j\cap (\R^2\setminus\Omega_{8\e})\neq\emptyset$, where $\Omega_{\eta}=\{x\in\Omega: \hbox{dist}(x,\partial\Omega)>\eta\}$.
In particular, the sets of families $\{E^\e_j\}\in \E^\e_0$ that intersect the boundary are sets in
$\{\overline E^{\e}_j\}$. 

Then the family $\F^\e_0$ $\Gamma$-converges to
$$
\F_0(A)=\begin{cases}
\F(A) 
&\hbox{if $A\in \P_0(\Omega,\overline A)$}\cr
+\infty & \hbox{otherwise,}
\end{cases}
$$
where $\overline A=(\overline A_0,\ldots, \overline A_8)$ and 
$\P_0(\Omega,\overline A)$ is the collections of $A\in\P(\R^2)$ such that
$A_i\setminus\Omega=\overline A_i\setminus\Omega$ for all $i$.

The proof of lower bound is immediate. As the construction of recovery sequences is
concerned, given a recovery sequence for $\F(A)$ with $A\in \P_0(\Omega,\overline A)$, we can modify 
its sets close to $\partial\Omega$ as in the proof of the lower bound in Theorem \ref{thm:gammalim}.
Indeed, the proof therein deals with the case when $\Omega$ is a coordinate square 
centered in $0$ and $\overline A$ is a partition with $\{x:  \nu\cdot x=0\}$ as the unique interface.

As a consequence of the fundamental theorem of $\Gamma$-convergence \cite{GCB}
we then have that minimum values and minimizers for $\F^\e_0$ converge to the minimum value and a minimizer of $\F_0$.

\begin{remark}\label{co-min}\rm
We note that by its BV-ellipticity properties \cite{AB}, the function $f$ satisfies 
$$
{1\over |\nu_1| \vee |\nu_2|}f(l,k,\nu)=\min\Bigl\{\sum_{i<j} \int_{\overline Q\cap \partial A_i\cap \partial A_j}f(i,j,\nu^i)d\H^1: A\in \P(l,k,\nu)\Bigr\}
$$
where $\P(l,k,\nu)$ is the set of partitions $A$ such that $A=\overline A$ on $\R^2\setminus Q$ and $\overline A$ is any fixed partition such that $\overline A_l=\{x: \nu\cdot x>0\}$
and $\overline A_k=\{x: \nu\cdot x<0\}$ in an external neighbourhood of $Q$.
By the previous remark this minimum can be see as the limit as $\e\to 0$ of the minima for the corresponding approximating sequence, which can be expressed in terms of the minima $a_{ij}({1\over\e},\nu)$. By renaming $T={1\over\e}$ we obtain the limit formula for $f$
\be\label{energy-density-4}
{1\over |\nu_1| \vee |\nu_2|} f (i,j,\nu)= \lim_{T\to+\infty} 
{1\over T} 
 \min\Big\{
\F^1(\{E_k\},Q_{T})  \colon E_k\in\{E_h^{i,j,\nu}\} \hbox{ if } E_k\cap \delta Q_T\neq\emptyset
\Big\},
\ee
which proves that the $\liminf$ in \eqref{energy-density} is actually a limit.
Similarly, we obtain the limit formula \eqref{energy-density-2} 
repeating the same argument with $Q^\nu$ in the place of $Q$.
\end{remark}

\subsection{Alternate descriptions}
From Theorem \ref{thm:gammalim} we can derive descriptions for the limit of the energies
$\F^\e$ with respect to other types of convergence.

We can consider the energies $\F^\e$ as defined on sets $\A^\e(\Omega)$. We define
\begin{equation}\label{spin-form}
\F^\e_{\Omega,s}(E)= \H^1(\Omega\cap\partial E) \hbox{ if } E=\bigcup_j E_j \hbox{ and }  \{E_j\}\in \E^\e(\Omega).
\end{equation}
\vskip-.1cm\noindent
The subscript $s$ stands for ``spin''. By this notation we imply that we
regard a union of molecules as a constrained spin system and we
do not wish to distinguish between different types of molecules. 
We then may consider the convergence $E^\e\to E$, defined as $|E^\e\triangle E|\to 0$
as $\e\to 0$, for which the sequence $\F^\e$ is equi-coercive. 
From Theorem \ref{thm:gammalim} we deduce the following result.

\begin{theorem}\label{thisp}
Let $\F_{\Omega,s}^\e$ be defined by \eqref{spin-form}. Then the $\Gamma$-limit of $\F_{\Omega,s}^\e$ 
with respect to the convergence $E^\e\to E$ is defined on sets of finite perimeter $E$ 
by 
\begin{equation}\label{spin-form-lim}
\F_{\Omega,s}(E)= \inf\Biggl\{ \F_\Omega(A): A= (A_0,\ldots, A_8)\in \P(\Omega), \bigcup_{i=1}^8 A_i=E\Biggr\}.
\end{equation}
\end{theorem}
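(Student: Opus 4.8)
The plan is to deduce Theorem \ref{thisp} directly from Theorem \ref{thm:gammalim} by a general principle: the spin functional $\F^\e_{\Omega,s}$ is the composition of the refined functional $\F^\e_\Omega$ with the continuous surjection that forgets the phase labels, namely $\pi(\{E_j\})=\bigcup_j E_j$. Since $\F^\e_{\Omega,s}(E)$ depends only on $E=\pi(\{E_j\})$, it equals $\min\{\F^\e_\Omega(\{E_j\}): \pi(\{E_j\})=E\}$ (in fact every $\{E_j\}$ with that union gives the same value, so the $\inf$ is attained trivially). Thus we are in the classical setting of $\Gamma$-convergence under a continuous change of variables / relaxation along a projection, and I would invoke the standard result (see \cite{GCB}) that if $\F^\e_\Omega\,\Gamma$-converges to $\F_\Omega$ with respect to the convergence of Definition \ref{def-convergenza}, then the infimum over fibers $\Gamma$-converges to the infimum of the limit over the corresponding fibers, with respect to the image convergence $E^\e\to E$.

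Concretely I would prove the two inequalities separately. For the \emph{liminf inequality}, take $E^\e\to E$ with $\sup_\e \F^\e_{\Omega,s}(E^\e)<+\infty$; choosing any $\{E^\e_j\}\in\E^\e(\Omega)$ with $\bigcup_j E^\e_j=E^\e$, the energy bound \eqref{equibounded} holds, so Theorem \ref{te-co} gives a subsequence with $\{E^\e_j\}\to A$ for some $A\in\P(\Omega)$; passing to the limit along that subsequence, $|\Omega^\e_i\triangle A_i|\to0$ forces $\bigcup_{i=1}^8 A_i=E$ (since $|E^\e\triangle E|\to0$ and $E^\e=\bigcup_{i=1}^8\Omega^\e_i$ up to the negligible set identified in Theorem \ref{te-co}). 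The $\Gamma$-liminf inequality of Theorem \ref{thm:gammalim} then yields
$$
\liminf_\e \F^\e_{\Omega,s}(E^\e)=\liminf_\e \F^\e_\Omega(\{E^\e_j\})\ge \F_\Omega(A)\ge \F_{\Omega,s}(E),
$$
the last step being the definition \eqref{spin-form-lim} of $\F_{\Omega,s}$ as an infimum over all admissible $A$.

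For the \emph{limsup inequality}, fix $E$ and let $A\in\P(\Omega)$ with $\bigcup_{i=1}^8 A_i=E$ be an admissible competitor; by Theorem \ref{thm:gammalim} there is a recovery sequence $\{E^\e_j\}\to A$ with $\limsup_\e \F^\e_\Omega(\{E^\e_j\})\le \F_\Omega(A)$. Setting $E^\e=\bigcup_j E^\e_j$, the convergence $\{E^\e_j\}\to A$ gives $|E^\e\triangle E|\to0$, so $E^\e\to E$ and $\F^\e_{\Omega,s}(E^\e)=\F^\e_\Omega(\{E^\e_j\})$, whence $\limsup_\e \F^\e_{\Omega,s}(E^\e)\le \F_\Omega(A)$. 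Taking the infimum over admissible $A$ produces a recovery sequence achieving $\F_{\Omega,s}(E)$ by a standard diagonal argument over a minimizing sequence of partitions. The main obstacle I anticipate is purely the bookkeeping in the liminf step: one must check that the limit partition $A$ obtained from compactness actually satisfies the constraint $\bigcup_{i=1}^8 A_i=E$ rather than merely $\bigcup_{i=1}^8 A_i\subset E$, which is where the negligibility of $\hat\Omega^\e$ from Theorem \ref{te-co} (ensuring no perimeter and hence no mass escapes into the discarded frame) is essential; everything else is the formal relaxation-under-projection structure.
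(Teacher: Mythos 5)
Your proposal is correct and follows essentially the same route as the paper: decompose $E^\e$ into the eight phase components, invoke the compactness of Theorem \ref{te-co} and the $\Gamma$-liminf inequality of Theorem \ref{thm:gammalim} for the lower bound, and take the union of a recovery sequence for an (almost) optimal partition $A$ for the upper bound. The only cosmetic difference is that you spell out the verification that $\bigcup_{i=1}^8 A_i=E$ (which follows simply from $E^\e=\bigcup_{i=1}^8\Omega^\e_i$ and \eqref{seconda}, with no need to invoke the negligible frame $\hat\Omega^\e$) and handle a possibly non-attained infimum by a diagonal argument, points the paper leaves implicit.
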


\begin{proof} In order to prove the lower bound it suffices to remark that if 
$\sup_\e\F^\e_{\Omega,s}(E_\e)<+\infty$ and $E_\e\to E$ then, up to subsequences,
we can decompose $E_\e=\bigcup_{i=1}^8 E^i_\e$ with $E^i_\e\to A^i$, so that
$$
\liminf_{\e\to 0} \F^\e_{\Omega,s}(E_\e)\ge \F_\Omega(A)\ge \F_{\Omega,s}(E).
$$
In order to prove the upper bound, we may suppose that the infimum in 
\eqref{spin-form-lim} is achieved by some $A= (A_0,\ldots, A_8)$ with 
$\bigcup_{i=1}^8 A_i=E$. We then take a recovery sequence $\{E^\e_j\}$ for
$\F_\Omega(A)$, and a recovery sequence for $\F_{\Omega,s}(E)$ is 
then given by $E^\e=\bigcup_jE^\e_j$.
\end{proof}

\begin{remark}[non-locality of the $\Gamma$-limit]\rm
The $\Gamma$-limit $\F_{\Omega,s}(E)$ 
cannot be represented as an integral on $\partial E$. 

To check the non locality, consider as an example the target set $E$ obtained as
the intersection of the two Wulff shapes in Fig.~\ref{Fig23} and Fig.~\ref{wulff2}. If it were
local then the optimal microstructure close to an edge with normal ${1\over\sqrt{10}}(3,-1)$
should be composed of molecules in some $\ZZ_i$ with $i\in\{1,\ldots,4\}$, while 
the optimal microstructure close to an edge with normal ${1\over\sqrt{10}}(3,1)$
should be composed of molecules in some $\ZZ_i$ with $i\in\{5,\ldots,8\}$. 
This implies that the optimal $A_1,\ldots, A_8$ must have at least two non-empty
sets, and the value of $\F_{\Omega,s}(E)$ depends on an interface not localized
on $\partial E$.

Note more in general that we can give a local lower bound by optimizing the surface energy density
at each fixed value of $\nu$.
Namely, if we define
$$
f(x)=\min\{f(i,0,x): i\in\{1,\ldots,8\}\},
$$
then a lower bound for $\F_{\Omega,s}(E)$ is given by
$$
\F_{\Omega,s}(E)\ge \int_{\overline\Omega\cap\partial E} f^{**}(\nu)d\H^1,
$$
where $\nu$ is the inner normal to $E$ and $f^{**}$ is the convex envelope of $f$ \cite{AB}.
Note that this estimate derives from \eqref{spin-form-lim} by neglecting
interfacial energies in $\F_\Omega(A)$ which are internal to $E$; i.e., those 
corresponding to $\partial A_i\cap\partial A_j$ with $i,j> 0$.
By the computations of Section \ref{descrip} we can give an explicit description of $f^{**}$,
since it is positively one homogeneous and its level set $\{x: f^{**}(x)=1\}$ is the convex envelope
of the union of the two corresponding level sets for $f(1,0,\cdot)$ and $f(5,0,\cdot)$ in 
Fig.~\ref{Fig23} and Fig.~\ref{wulff2}. Since for some $\nu$ (e.g., $\nu=(1,0)$) 
we have $f^{**}(\nu)<f(\nu)$,
for such $\nu$ the optimal interface would be obtained by a surface microstructure
with both $S$ and $R$ molecules, which is not possible without introducing additional 
surface energy corresponding to some $\partial A_i\cap\partial A_j$
with $i\in\{1,\ldots,4\}$ and $j\in\{5,\ldots,8\}$. This shows that the lower bound is not sharp for 
example for sets with a vertical part of the boundary. 
\end{remark}

Another possibility is to consider the two types of molecules $R$ and $S$  as parameters; i.e.,
rewrite the energy as 
\begin{eqnarray}\label{RS-form}
&&\F^\e_{\Omega,R,S}(E_R, E_S)= \H^1(\Omega\cap\partial (E_R\cup E_S) ) \\
&&\hbox{ if } 
E_R=\bigcup \Bigl\{E_j: E_j\in\bigcup_{i=1}^4\ZZ^\e_i\Bigr\},\quad E_S=\bigcup  \Bigl\{E_j: E_j\in\bigcup_{i=5}^8\ZZ^\e_i\Bigr\}\hbox{ and }  \{E_j\}\in \E^\e(\Omega),\nonumber
\end{eqnarray}
and consider the convergence $(E^\e_R,E^\e_S)\to (E_R, E_S)$ defined as the separate convergence $E^\e_R\to E_R$ and  $E^\e_S\to E_S$. Note that also this convergence is 
compact by Theorem \ref{te-co}, since $E^\e_R=\Omega^\e_1\cup\cdots\cup \Omega^\e_4$
and $E^\e_S=\Omega^\e_5\cup\cdots\cup \Omega^\e_8$ in the notation of Definition \ref{def-convergenza}. We then have the following result, whose proof is essentially the same as
that of Theorem \ref{thisp}.

\begin{theorem}\label{thRS}
Let $\F_{\Omega,R,S}^\e$ be defined by \eqref{spin-form}. Then the $\Gamma$-limit of $\F_{\Omega, R,S}^\e$ 
with respect to the convergence $(E^\e_R,E^\e_S)\to (E_R,E_S)$ is defined on 
pairs of sets of finite perimeter $(E_R,E_S)$ 
by 
\begin{equation}\label{RS-form-lim}
\F_{\Omega,R,S}(E_R,E_S)= \inf\Biggl\{ \F_\Omega(A): A= (A_0,\ldots, A_8), \bigcup_{i=1}^4 A_i=E_R,
\bigcup_{i=5}^8 A_i=E_S\Biggr\}.
\end{equation}
\end{theorem}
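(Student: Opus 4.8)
The plan is to reduce Theorem \ref{thRS} to Theorem \ref{thm:gammalim} (the main $\Gamma$-convergence result) exactly as Theorem \ref{thisp} was, treating $\F_{\Omega,R,S}^\e$ as a functional obtained from $\F_\Omega^\e$ by a change of the order parameter that merges the eight modulated phases into two macroscopic variables $E_R$ and $E_S$. The key observation, already recorded in the paragraph preceding the statement, is that the convergence $(E_R^\e,E_S^\e)\to(E_R,E_S)$ is compact by Theorem \ref{te-co}, since $E_R^\e=\Om_1^\e\cup\cdots\cup\Om_4^\e$ and $E_S^\e=\Om_5^\e\cup\cdots\cup\Om_8^\e$ in the notation of Definition \ref{def-convergenza}; thus from any sequence with equibounded energy one extracts a subsequence converging to some partition $A=(A_0,\ldots,A_8)\in\P(\Om)$, whence $(E_R^\e,E_S^\e)$ converges to the pair $(\bigcup_{i=1}^4 A_i,\bigcup_{i=5}^8 A_i)$.

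For the lower bound I would argue as follows. Suppose $\sup_\e\F_{\Omega,R,S}^\e(E_R^\e,E_S^\e)<+\infty$ and $(E_R^\e,E_S^\e)\to(E_R,E_S)$. By the compactness just recalled, up to a subsequence the underlying family $\{E_j^\e\}$ converges to a partition $A\in\P(\Om)$ with $\bigcup_{i=1}^4 A_i=E_R$ and $\bigcup_{i=5}^8 A_i=E_S$. Since $\F_{\Omega,R,S}^\e(E_R^\e,E_S^\e)=\H^1(\Om\cap\partial(E_R^\e\cup E_S^\e))=\F_\Omega^\e(\{E_j^\e\})$ is the very same length that enters Theorem \ref{thm:gammalim}, the liminf inequality of that theorem gives
$$
\liminf_{\e\to0}\F_{\Omega,R,S}^\e(E_R^\e,E_S^\e)\ge\F_\Omega(A)\ge\F_{\Omega,R,S}(E_R,E_S),
$$
the last step being simply the definition \eqref{RS-form-lim} of $\F_{\Omega,R,S}$ as an infimum over all admissible partitions with these prescribed unions. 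For the upper bound I would choose, for given $(E_R,E_S)$, a partition $A=(A_0,\ldots,A_8)$ realizing (or nearly realizing) the infimum in \eqref{RS-form-lim}, invoke Theorem \ref{thm:gammalim} to produce a recovery sequence $\{E_j^\e\}$ for $\F_\Omega(A)$, and then set $E_R^\e=\bigcup\{E_j^\e:E_j^\e\in\bigcup_{i=1}^4\ZZ_i^\e\}$ and $E_S^\e=\bigcup\{E_j^\e:E_j^\e\in\bigcup_{i=5}^8\ZZ_i^\e\}$. These converge to $(E_R,E_S)$ by Definition \ref{def-convergenza}, and $\limsup_\e\F_{\Omega,R,S}^\e(E_R^\e,E_S^\e)=\limsup_\e\F_\Omega^\e(\{E_j^\e\})\le\F_\Omega(A)\le\F_{\Omega,R,S}(E_R,E_S)+\eta$ with $\eta$ arbitrarily small if the infimum is not attained.

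The only genuinely nontrivial point, as the authors signal by writing \emph{``whose proof is essentially the same as that of Theorem \ref{thisp}''}, is the decomposition step in the lower bound: one must guarantee that a sequence controlled only through the total boundary length $\H^1(\Om\cap\partial(E_R^\e\cup E_S^\e))$ nonetheless yields subsequential convergence of the \emph{finer} eight-phase partition, so that Theorem \ref{thm:gammalim} applies. This is precisely the content of Theorem \ref{te-co}, whose delicate part was deriving a separate perimeter bound on each phase from a bound on the perimeter of the union alone; hence the main obstacle is already dispatched by the compactness theorem, and the remainder of the argument is a verbatim transcription of the proof of Theorem \ref{thisp} with the index set $\{1,\ldots,8\}$ split into the two blocks $\{1,\ldots,4\}$ and $\{5,\ldots,8\}$ rather than collapsed to a single block.
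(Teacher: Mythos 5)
Your proposal is correct and follows essentially the same route as the paper, which proves Theorem \ref{thRS} by the verbatim argument of Theorem \ref{thisp}: the compactness Theorem \ref{te-co} upgrades the pair convergence to subsequential convergence of the full nine-phase partition for the lower bound, and a (near-)optimal partition in \eqref{RS-form-lim} together with a recovery sequence from Theorem \ref{thm:gammalim} gives the upper bound. Your explicit handling of a possibly unattained infimum via $\eta$ is a minor tidying of the paper's ``we may suppose the infimum is achieved'' and changes nothing of substance.
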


\begin{remark} \rm
We can give a lower bound of $\F_{\Omega,R,S}$ by an interfacial energy 
by interpreting this functional as defined on partitions of $\Omega$ into three sets of finite 
perimeter $(A_0,A_1,A_2)$ where $A_1=E_R$, $A_2=E_S$ and 
$A_0=\Omega\setminus (E_R\cup E_S)$. By Theorem \ref{thm:gammalim}
and a minimization argument we have
\begin{eqnarray}\label{gamma-lim-RS}
\F_{\Omega,R,S}(E_R,E_S)&\ge&
\int_{\Omega\cap\partial E_R\cap\partial A_S} f_0^{**}(\nu^R)d\H^{1}\\
&&+
\int_{\overline\Omega\cap(\partial E_R\setminus\partial E_S)} f_R(\nu^R)d\H^{1}+
\int_{\overline\Omega\cap(\partial E_S\setminus\partial E_R)} f_S(\nu^S)d\H^{1},\nonumber
\end{eqnarray}
where $f_R(\nu)=f(1,0,\nu)$, $f_S(\nu)= f(5,0,\nu)$ and
$$
f_0(\nu)=\min\{f(i,j,\nu): i\in\{1,\ldots,4\}, j\in\{5,\ldots,8\}\},
$$
and $\nu^R$ and $\nu^S$ are the inner normals to $E_R$ and $E_S$, respectively.
Note however that the right-hand side in \eqref{gamma-lim-RS}
may not be a lower-semicontinuous functional
on partitions, and hence should be relaxed taking a BV-elliptic envelope \cite{AB}.
This computation would be necessary to check if this lower bound is actually sharp
so that the functional $\F_{\Omega,R,S}$ is local. Unfortunately the computation of a 
BV-elliptic envelope is in general an open problem, and cannot be reduced to a 
computation of a convex envelope as in the case of a single set of finite perimeter.
\end{remark}

\section{Generalizations and remarks}\label{Sect4}
1. We can consider an inhomogeneous dependence for the surface energy.
As an example, we can fix two positive constants $c_R$ and $c_S$ and 
consider the functionals
$$
\F^\e(\{E_j\};\Omega)=c_R\H^1(\Omega\cap \partial_R E)+
c_S\H^1(\Omega\cap \partial_S E),
$$
where $E=\bigcup_j E_j$ and
$$
\partial_R E=\partial E\cap \partial\bigcup\Bigl\{E_j: E_j\in \ZZ_1\cup\cdots\cup\ZZ_4\Bigr\}
$$
$$
\partial_S E=\partial E\cap \partial\bigcup\Bigl\{E_j: E_j\in \ZZ_5\cup\cdots\cup\ZZ_8\Bigr\}
$$
The result is the same, upon defining the surface densities using the corresponding $\F^1$.

Note that is this case, when computing $f(i,0,\nu)$, we might have a {\em wetting} phenomenon; i.e., the presence of
a layer of a different phase at the boundary of another. This is clear if for example 
$c_S$ is sufficiently smaller than $c_R$ at the boundary between the phase $0$ and the phase~$1$.
\begin{figure}[ht]
\centering
\includegraphics[width=.25\textwidth]{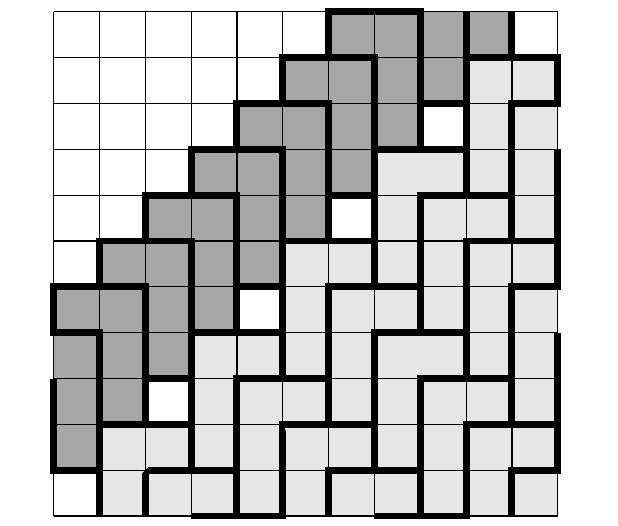}
\caption{A wetting layer}
\label{Fig15}
\end{figure} 
In Fig.~\ref{Fig15} we picture a configuration giving the estimate 
$$
f\Bigl(1,0,{1\over\sqrt2}(-1,1)\Bigr)
\le {3\over 2} c_S \sqrt 2 +{1\over 2} c_R\sqrt 2,
$$
which is energetically convenient with respect to the one in Fig.~\ref{Fig11} if $3c_S<c_R$.

\smallskip
2. In the whole analysis we can replace the surface energy by a scaled volume energy
\begin{equation}\label{volumen}
\F^\e(\{E_j\};\Omega)={1\over\e} |\Omega\setminus E|.
\end{equation}
Note that in this case the empty phase disappears by the definition 
of the energy, so that the $\Gamma$-limit is finite only if $A_0=\emptyset$.

The simplest case is $\Omega=\R^2$. In this case the proof proceeds exactly in the 
same way. Indeed, the argument of Lemma \ref{quadrato-interno} is independent of
energy arguments, while in the compactness Theorem \ref{te-co} the equi-boundedness
of the energy is used to obtain estimate \eqref{stir}, which follows in an even easier way
under the assumption of the equiboundedness of the energies \eqref{volumen}.

The surface densities are then defined by 
\be\label{energy-density-5}
f (i,j,\nu)= (|\nu_1| \vee |\nu_2|) \lim_{T\to+\infty} 
{1\over T} 
 \min\Big\{\Bigl|Q_T\setminus \bigcup\{E_k\}\Bigr|
  \colon E_k\in\{E_h^{i,j,\nu}\} \hbox{ if } E_k\cap \delta Q_T\neq\emptyset
\Big\},
\ee
since we simply have $\F^1(\{E_k\},\Omega)=|\Omega\setminus \bigcup\{E_k\}|$.

The case $\Omega\neq\R^2$ cannot be treated straightforwardly as above,
since the approximation argument of $\Omega$ by polyhedral sets in the proof
of the upper inequality cannot be used. We conjecture that a different boundary
term on $\partial\Omega$ arises, taking into account approximations
of $\Omega$ with minimal two-dimensional measure. Note that even when
the approximation is not constrained to be performed with unions of molecules
this may be a delicate numerical issue~\cite{Ros}.

\smallskip
3. We can give a higher-order description of our system by scaling the energy as
$$
\F^\e_1(\{E_j\};\Omega)={1\over\e}\H^1(\Omega\cap \partial E). 
$$
In this case the limit is finite only at minimizers of $\F(A;\Omega)$.

(a) If $\Omega\neq \R^2$ then the only minimizer is given by $A_0=\Omega$
and $A_i=\emptyset$ for $i>0$. Sequences with equi-bounded energy are 
$\{E^\e_j\}$ with $\sup_\e\sharp\{E^\e_j\}<+\infty$. We can then define the 
convergence $\{E^\e_j\}\to ((x_1,r_1,s_1),\ldots, (x_N,r_N,s_N))$, where
$x_k$ are the limit points of sequences in $\{E^\e_j\}$, $r_k$ is the number of
molecules of the type $\e R(n)$ in $\{E^\e_j\}$ converging to $x_k$ and $s_k$ is the number of
molecules of the type $\e S(n)$ in $\{E^\e_j\}$ converging to $x_k$.
The $\Gamma$-limit is then defined by
$$
\F_1(\{(x_k,r_k,s_k)\}_k)=\sum_k \phi(r_k, s_k),
$$
\vskip-.3cm
\noindent where 
\begin{eqnarray*}
\phi(r,s)&=&\min\Bigl\{ \H^1(\partial E): E=\bigcup E_j, \{E_j\}_j \hbox{ disjoint family }\\
&&\qquad\hbox{ composed of } r \hbox{ sets } R(n_k)
\hbox{ and } s \hbox{ sets }  S(m_l)\Bigr\}
\end{eqnarray*}

(b) If $\Omega= \R^2$ then we have the nine minimizers with $A_i=\R^2$
for some $i$ and $A_k=\emptyset$ for $k\neq i$. 
For $i=0$ we are in the same case as above. Otherwise, we can suppose that
$i=1$. We can consider the convergence $\{E^\e_j\}\to ((x_1,s_1),\ldots, (x_N,s_N))$
with $s_k$ defined as above. The $\Gamma$-limit is then defined by
$$
\F_1(\{(x_k,s_k)\})=\sum_k \phi(s_k),
$$
\vskip-.3cm
\noindent where 
\begin{eqnarray*}
\phi(s)&=&\min\Bigl\{ \H^1(\partial E): E=\bigcup E_j, \{E_j\}_j \hbox{ disjoint family composed of}\\
&&\qquad s \hbox{ sets }  S(m_l)\hbox{ and of all the }E_j\in \ZZ_1\hbox{ outside a compact set} \Bigr\}.
\end{eqnarray*}
We can conjecture that the minimizers of this problem are given by an array
of $s$ sets $E_j$ in the same $\ZZ_i$ for some $i\ge 4$ surrounded by elements in $\ZZ_1$.

\smallskip
4. The analysis of the functionals $\F^\e$ is meaningful also if only one type of molecule
is taken into account. In this case we have only four modulated phases and the limit is defined on partitions into sets of finite perimeter indexed by five parameters. The proof
follows in the same way, with the interfacial energies defined by using families 
composed only of the type of molecule considered.

It is interesting to note that this remark applies also if we take into account only one type
of molecules in the pair on right-hand side of Fig.~\ref{more-mol}. Indeed, in that case
there is a single pattern for the ground states with four modulated phases and 
Lemma \ref{quadrato-interno} holds (while we have already remarked that it does not
hold if we consider ensembles of both molecules in that pair). 
On the contrary, for a single type of molecules in the pair on 
left-hand side of Fig.~\ref{more-mol}, it is possible to construct infinitely many different
structures with zero energy composed of stripes of the same two-periodic structure
(see the ones in the first picture in Fig.~\ref{no-lemma})
with arbitrary vertical shifts. Hence, it is not possible to reduce to a single pattern (or a finite
number of patterns) for the ground states.

\goodbreak
\bigskip\noindent{\bf Acknowledgments.}
The subject of this paper was inspired by a MoMA Seminar by G.~Contini at Sapienza University in Rome. Part of this work was elaborated in 2014 while the first two authors were visiting the Mathematical Institute in Oxford, whose kind hospitality is gratefully acknowledged.

\goodbreak

\end{document}